\DeclareMathOperator{\diag}{diag}
\newcommand{\norm}[1]{\left\lVert#1\right\rVert}
\newtheorem{detector}{\textbf{Detector}}
\newtheorem{theorem}{\textbf{Theorem}}
\newtheorem{proposition}{\textbf{Proposition}}
\newtheorem{lemma}{\textbf{Lemma}}
\newtheorem{assumption}{\textbf{Assumption}}
\title{\LARGE \bf
	%Secure Platooning  under Attacked  Vehicle Measurements
	Secure Platooning  of Autonomous Vehicles \\ Under Attacked GPS Data
}
\author{Xingkang~He,  
	Ehsan Hashemi, 
	Karl H. Johansson % <-this % stops a space
	\thanks{The work is supported by Knut \& Alice Wallenberg foundation, and by Swedish Research Council.}% <-this % stops a space
	\thanks{Xingkang He, and Karl H. Johansson are with Division of Decision and Control Systems, School of Electrical Engineering and Computer Science. KTH Royal Institute of Technology, Sweden ((xingkang,kallej)@kth.se).}%
	\thanks{Ehsan Hashemi is with the Department of Mechanical and Mechatronics Engineering, University of Waterloo, Waterloo, ON, Canada (ehashemi@uwaterloo.ca).}
}
\begin{document}

	\maketitle
	\thispagestyle{empty}
	\pagestyle{empty}

	%%%%%%%%%%%%%%%%%%%%%%%%%%%%%%%%%%%%%%%%%%%%%%%%%%%%%%%%%%%%%%%%%%%%%%%%%%%%%%%%
	\begin{abstract}
		In this paper, we study how to secure the platooning of autonomous vehicles when an unknown vehicle is under attack and bounded system uncertainties exist. For the attacked vehicle, its position and speed %absolute
		measurements from GPS can be manipulated arbitrarily by a malicious attacker. 
		First, to find out which vehicle is under attack, two  detectors are proposed by using the relative measurements (by camera or radar) and the local innovation obtained through measurements from neighboring vehicles. 
		%which is obtained through employing the absolute measurements  and the relative measurements of three neighboring vehicles, respectively.
		Then, based on the results of the detectors, we design a local state observer for each vehicle by applying a saturation method to the measurement innovation. 
		Moreover, based on the neighbor state estimates provided by the observer, a distributed controller is proposed to achieve the  consensus in vehicle speed and keep fixed desired distance between two neighboring vehicles. The estimation error by the observer and the platooning error by the controller are shown to be asymptotically upper bounded under certain conditions.%, respectively.
		The effectiveness of the proposed methods is also evaluated in numerical simulations.
		%A numerical simulation is provided to show the effectiveness of the proposed methods.
	\end{abstract}

	%%%%%%%%%%%%%%%%%%%%%%%%%%%%%%%%%%%%%%%%%%%%%%%%%%%%%%%%%%%%%%%%%%%%%%%%%%%%%%%%
	\section{Introduction}
	The potential to enable fast and reconfigurable mechanisms for increasingly prevalent Automated Driving Systems (ADS), cooperative intelligent transportation systems (C-ITS), and vehicle platooning, as well as compelling mobility and safety benefits, underscores the critical need to develop more reliable and secure distributed-system designs in intelligent transport.
	In vehicle platooning and cooperative adaptive cruise controls (CACC), estimating the current vehicle states %{\color{blue}(i.e., speeds and positions)}
	, which are shared via communication between vehicles, is essential for stability and traction control, as well as motion planning \cite{milanes2014cooperative, liang2016heavy, siampis2017real, jalali2017integrated}. Vehicle (longitudinal and lateral) speeds can be measured by a GPS, but their reliability due to loss of reception and the poor accuracy of available commercial GPSs, specifically in measuring sideslip (and lateral speed), necessitates developing a reliable state estimators in connected ADS that share measurements and local estimates over Dedicated Short-Range Communications (DSRC) or 5G NR access via 3GPP PC5 or IEEE 802.11 technologies.
	
	Vehicle platooning has enhanced in recent years in terms of reliability through vehicle-to-vehicle (V2V) connectivity, distributed state estimation, and learning-aided controls \cite{cui2018development, ucar2018ieee, jin2019adaptive}. However, the existing secure platooning solutions in C-ITS are prohibitively inefficient in dealing with malicious attacks on GPS measurements. Such attacks could be through transmission of the received GPS data (in data acquisition modules) to ADS' control systems or the GPS receiver. Thus, reliable state estimation, resilient to attacks and robust to system uncertainties,
	% changes in model parameters, 
	plays key roles in ensuring the safety of intelligent transport by improving the reliability of vehicle local active safety systems and enhancing the performance in cooperative tasks such as CACC \cite{turri2017cooperative, van2017analyzing}.
	Real-time methods are proposed in \cite{lyamin2014real, lyamin2019real} to diagnose jamming Denial-of-Service (DoS) attacks in IEEE 802.11p vehicular networks and its false alarm probabilities are estimated.
	A data-driven fault detection approach and a decision support system are proposed in \cite{sargolzaei2016machine} to diagnose attacks and track fault data injection attacks in CACC (in real time).
	
	To mitigate the DoS attack in CACC, a set of linear Luenberger observers is designed in \cite{biron2017resilient} by using LMI over delayed measurements to develop a resilient control strategy.
	An adaptive control strategy is designed in \cite{jin2018adaptive} to deal with time-invariant sensor and actuator attacks in a vehicular network, with wireless V2V communication, while guaranteeing uniform boundedness of the closed-loop system.
	A decentralized proportional-derivative controller augmented by a triggering mechanism for getting preceding vehicles' new measurements is designed in \cite{merco2019resilient} to maintain string stability in a vehicular platoon.
	A trust-based service recommendation scheme is proposed in \cite{hu2016replace} to avoid selecting badly behaved head vehicles for  ballot-stuffing and on–off attacks in vehicular ad hoc networks.
	In \cite{petrillo2018collaborative, pirani2018resilient} cooperative control protocols and state observers are provided for enhancing resilience to attacks and detecting faults in vehicular platoons.

	% The problem of designing a decentralized Coop-erative Adaptive Cruise Control (CACC) resilient to Denial-of-Service (DoS) attacks while satisfying performance require-ments is studied. Finally, the effectiveness of the proposed approach is shown in a numerical example

	To this end, by using available secure radar and stereo camera, which are available in autonomous vehicles, ADS, and advanced driver-assistance systems, and the potentially attacked GPS data in an unknown vehicle,
	% a  state observer algorithm is presented in this paper for arbitrary vehicle networks.
	%It allows for vehicles with different low-level controllers to share accurate state estimates for platooning. Another advantage is that the local state estimator uses on-board production vehicle sensor data, such as wheel speed and accelerations, to estimate vehicle speed
	this paper studies how to design a secure algorithm such that a group of autonomous vehicles achieve practical platooning.
	The main contributions of this paper are summarized in the following.
	\begin{enumerate}
		\item To find out which vehicle is under attack, we propose two detectors by using the relative measurements and the local innovation obtained through the absolute  and  relative measurements of three neighboring vehicles, respectively. Then, based on the results of the detectors, we design a local state observer for each vehicle by applying a saturation method to the measurement innovation. 
		
		% 	\item Based on the neighbor state estimates provided by the observer, we design a distributed controller  to achieve the practical platooning of the vehicles, i.e., with small errors, consensus in vehicle speed and keeping fixed desired distance between two neighboring vehicles.
		
		% 	\item We prove both the estimation error by the observer and the platooning error by the controller are asymptotically upper bounded  under certain conditions.
		
		\item Based on the neighbor state estimates provided by the observer, we design a distributed controller to achieve consensus in vehicle speed and keeping fixed desired distance between two neighboring vehicles. It is also shown that the estimation error (by the observer) and the platooning error are asymptotically upper bounded  under certain conditions.
		
	\end{enumerate}
	
	% 	propose a local  state observer  for each vehicle by employing a saturation method in the measurement update. 
	% 	Two online attack detectors are designed to find out which vehicle is under attack.
	% 	Based on the estimates of the observer, a distributed controller is proposed to achieve the  consensus in vehicle speed and keep fixed desired distance between two neighboring vehicles.  
	
	%	We propose a distributed algorithm, which consists of a local state observer, two online attack detectors and a distributed controller. 
	
	% First, to find out which vehicle is under attack, two  detectors are proposed by using the relative measurements and the local innovation  which is obtained through employing the absolute measurements  and the relative measurements (by camera or radar) of three neighboring vehicles, respectively.  Then, based on the results of the detectors, we design a local state observer for each vehicle by applying a saturation method to the measurement innovation. 
	% Moreover,  based on the neighbor state estimates provided by the observer, a distributed controller is proposed to achieve the  consensus in vehicle speed and keep fixed desired distance between two neighboring vehicles. The estimation error by the observer and the platooning error by the controller are shown to be uniformly bounded under certain conditions, respectively.
	
	The remainder of the paper is organized as follows: Section  \ref{sec_formulation} is on the problem formulation. Section \ref{sec_algorithm} provides the secure platooning algorithm, whose  performance is studied in   Section \ref{sec:analysis}. 
	After  numerical simulations in Section \ref{sec:simulation}, the paper is concluded in   Section \ref{sec_conclusion}.
	The main proofs are given in Appendix.

	\subsection{Notations}
	The superscript ``T" represents the transpose. $\mathbb{R}^{n\times m}$ is the set of real matrices with $n$ rows and $m$ columns.  $\mathbb{R}^n$ is the $n$-dimensional Euclidean space.
	$I_{n}$ stands for the $n$-dimensional square identity matrix. 
	%$\textbf{1}_N$ stands for the $N$-dimensional vector with all elements being one. 
	%$\diag\{\cdot\}$ and $\blockdiag\{\cdot\}$   represent the diagonalization operators of scalar elements and block elements, respectively.  
	$\diag\{\cdot\}$   represents the diagonalization operator.  
	$A\otimes B$ is the Kronecker product of $A$ and $B$.  $\norm{x}$ is the 2-norm of a vector $x$. $\norm{A}$ is the induced 2-norm, i.e., $\norm{A}=\sup\limits_{x\neq 0}\frac{\norm{Ax}}{\norm{x}}$.  
	$\lambda_{\min}(A)$ and $\lambda_{\max}(A)$ are the  minimal and maximal eigenvalues of a real-valued symmetric matrix  $A$, respectively.  $I_{i\in \mathcal{C}}$ stands for the indicator function, which equals 1 if $i\in \mathcal{C}$, and 0 otherwise.
	%$|\Gamma|$ is the cardinality of the set $\Gamma.$

	\section{Problem Formulation}\label{sec_formulation}
	\subsection{System model}
	Considering $N\geq 3$ vehicles, without loss of generality, we assume that the order of these vehicles from the leader to the tail is $1,2,\dots,N$.  
	The leader vehicle, i.e., vehicle $1,$  satisfies the following dynamics:
	%Then we can rewrite \eqref{eq_system} as follows:
	\begin{align}\label{eq_local_state}
	x_1(t+1)=Ax_1(t)+n_{1}(t),
	\end{align}
	and  vehicle $i$, $i=2,3,\dots,N$, satisfies
	\begin{align}\label{eq_local_state2}
	x_i(t+1)=Ax_i(t)+[0,Tu_i(t)]^T+n_{i}(t),
	\end{align}
	where  $x_j(t)=[s_j(t),v_j(t)]^T\in\mathbb{R}^2$ is the state of vehicle $j$ consisting of position $s_j(t)$ and velocity $v_j(t)$, 
	$n_{j}(t)\in\mathbb{R}^2$  the process noise,  $u_i(t)\in\mathbb{R}$  the control input,  $A=\left(\begin{smallmatrix}
	1&T\\
	0&1
	\end{smallmatrix}\right)$,
	and $T$  the sampling time, $j=1,2,\dots,N$.

	Vehicle $j\in\{1,2,\dots,N\}$,  obtains its position and velocity measurements from GPS:
	\begin{equation}\label{eq_system2}
	\begin{split}
	y_{j,j}(t)&=x_j(t)+a_{j}(t)+d_{j,j}(t)
	\end{split}
	\end{equation}
	where
	$y_{j,j}(t)\in\mathbb{R}^2$ and $d_{j,j}(t)\in\mathbb{R}^2$ are the measurement and measurement noise of vehicle $j$. $a_{j}(t)\in\mathbb{R}^2$ is the  attack signal injected by a malicious attacker.

	Moreover, by employing the vehicle sensors (e.g., radar and camera), vehicle $i\in\{2,3,\dots,N\}$ is able to measure the relative state between itself and its front vehicle (i.e., vehicle $i-1$):
	\begin{equation}\label{eq_system3}
	\begin{split}
	y_{i-1,i}(t)&=x_i(t)-x_{i-1}(t)+d_{i-1,i}(t), 
	\end{split}
	\end{equation}
	where $y_{i-1,i}(t)\in\mathbb{R}^2$ and $d_{i-1,i}(t)\in\mathbb{R}^2$ are the measurement and measurement noise. Note that  in this paper we assume the relative state measurements are attack-free.

	\begin{assumption}\label{ass_noise}
		On the noise processes in \eqref{eq_local_state}--\eqref{eq_system3} and the initial estimate $\hat x_{j}(0)$, for $j\in\{1,\dots,N\}, $ and $ i\in\{2,\dots,N\},$ it holds that
		\begin{align*}
		\norm{\hat x_{j}(0)-x_j(0)}&\leq q,\\
		\norm{n_j(t)}&\leq \epsilon, 	\\
		\max\{\norm{d_{j,j}(t)},\norm{d_{i-1,i}(t)}\}&\leq \mu, 	 
		\end{align*}
		where $q,\epsilon$, and $\mu$ are positive real-valued scalars known to each vehicle.
	\end{assumption}
	%
	%\begin{assumption}\label{ass_graph}
	%	The graph is undirected and connected.
	%\end{assumption}

	\subsection{Attack model}
	Denote $\mathcal{V}=\{1,2,\dots,N\}$, $\mathcal{A}$ the label set of the attacked vehicle, and $\mathcal{A}^c=\mathcal{V}-\mathcal{A}$  the label set of attack-free vehicles.

	\begin{assumption}\label{ass_attack}
		There is an unknown vehicle $j\in\mathcal{V}$ under attack, i.e., $\mathcal{A}=\{j\},$ such that 
		the attack signal $a_{i}(t)$ in \eqref{eq_system2} satisfies 
		\begin{align*}
		a_{i}(t)&\in\mathbb{R}^2, i=j,\\
		a_{i}(t)&=0, i\neq j.
		\end{align*}
		%	The number of attacked vehicles satisfies
		%	\begin{align*}
		%	|\mathcal{A}|\leq 1.
		%	\end{align*}
		%	where $s$ is known to the system defender.
		%	Two conditions on the attack hold as follows:
		%	\begin{enumerate}
		%		\item The relative measurements of all vehicles, i.e., $y_{i,j}$, $j\in\mathcal{N}_i,i=1,\dots,N$, and the absolute measurements of  vehicle $i$, i.e., $y_{i,i}$, $2\leq i\leq N-1$,   can be compromised by a malicious attacker.
		%		\item The label difference between two attacked vehicles is larger than 1, i.e., $|i-j|>2, \forall i,j\in\mathcal{A}$.
		%	\end{enumerate}
	\end{assumption}
	
	From Assumption \ref{ass_attack}, the GPS measurements of the attacked vehicle $j$ can be manipulated arbitrarily. There is tradeoff on the use of the measurements. On one hand, if the measurements are fully trusted by conventional observers, once the vehicle is under attack, the estimation performance would be seriously degraded by the injected attack signals. On the other hand, if little confidence is given to the measurements, the estimation performance would be also degraded, since little information is utilized. Therefore, it is worth studying how to employ the potentially attacked measurements and how to design detectors to find out the attacked vehicle.

	\subsection{Communication of vehicles}
	We denote $\bar{\mathcal{N}}_i$ the neighbor set of vehicle $i$, where $\bar{\mathcal{N}}_i=\{i-1,i\}$ if $i\in\{2,\dots,N-1\}$, $\bar{\mathcal{N}}_1=\{2,3\}$, and  $\bar{\mathcal{N}}_N=\{N-2,N-1\}$. The communication topology of the vehicles is illustrated in Fig.~\ref{fig:Model_Description} with $N=7$, where each vehicle $i$, $i\in\{1,2,\dots,N\}$, can receive the measurements of its neighboring vehicle $j$, $j\in\bar{\mathcal{N}}_i$, including both absolute and relative measurements in \eqref{eq_system2} and \eqref{eq_system3}.
	Note that for the leader and tail vehicles, i.e., vehicle $1$ and  vehicle $N$, to ensure the redundancy of measurement information against the possible attacks on them, we require  that vehicle $1$ can  obtain the measurements from  vehicles $2$ and $3$, and that vehicle $N$   can  obtain the measurements from vehicles $N-1$ and $N-2$.

	% We denote $\mathcal{N}_i$ the neighbor set of vehicle $i$, where $\mathcal{N}_i=\{i-1,i\}$ if $i\in\{2,\dots,N-1\}$, $\mathcal{N}_{1}=\{2\}$, and  $\mathcal{N}_{N}=\{N-1\}$. 
	% We assume each vehicle $i$, $i\in\{1,2,\dots,N\}$, can receive the measurements of its neighboring vehicle $j$, $j\in\mathcal{N}_i$, including both absolute and relative measurements in \eqref{eq_system2} and \eqref{eq_system3}. Moreover, for the leader and tail vehicles, i.e., vehicle $1$ and  vehicle $N$, to ensure the redundancy of measurement information against the possible attacks on them, we require  that vehicle $1$ can  obtain the measurements from  vehicles $2$ and $3$, and that vehicle $N$   can  obtain the measurements from vehicles $N-1$ and $N-2$.  In the following, we denote
	
	%  In the following, the set $\bar{\mathcal{N}_i}$ stands for the neighbor set of vehicle $i$ in the  measurement communication topology. From \eqref{eq_measurement_notation}, $|\bar{\mathcal{N}_i}|$=2.
	
	%  The communication topology of the vehicles is  shown in Fig.~\ref{fig:Model_Description}.
	%-----------------------------------------------------
	%\vspace{-10pt}
	\begin{figure}[t]
		\centering
		\includegraphics[width=0.99\linewidth]{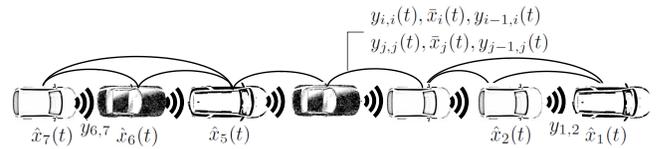}
		\caption{Platoon model and communication topology of seven vehicles: Each vehicle is able to measure the relative state between itself and its front vehicle by radar or camera (i.e.,  $y_{j,i}$, $j\in\bar{\mathcal{N}_i}$). For two  vehicles connected  by wireless communications (e.g., WIFI), they can exchange their absolute and relative measurements (i.e., $y_{i,i}$ and $y_{j,i}$, $j\in\bar{\mathcal{N}_i}$), and the predicted estimates of own states (i.e., $\bar x_i(t)$).}
		\label{fig:Model_Description}
	\end{figure}
	%-----------------------------------------------------

	\textbf{Problem:} We aim to design the control input $u_i(t)$ for vehicles $i,2\leq i\leq N,$ such that their speeds are close to the speed of the leader vehicle $1$ and two neighboring vehicles keep a certain distance in position, i.e.,  
	\begin{align}\label{eq_platoon}
	\begin{split}
	&\limsup\limits_{t\rightarrow \infty}|v_i(t)-v_1(t))|\leq \gamma_1,\\
	&\limsup\limits_{t\rightarrow \infty}|s_i(t)- s_{i-1}(t)-\Delta_{i,i-1}|\leq \gamma_2,
	\end{split}
	\end{align}
	against the attack signal $\{a_i(t),i\in\mathcal{A}\}$ injected by some malicious attacker,
	where $\gamma_1$ and $\gamma_2$ are nonnegative real-valued scalars, related to the system noise.
	\section{Co-design of observer,  detector, and controller}\label{sec_algorithm}
	%Since the vehicles run in a high way, the topology of the vehicles is a line. Thus, for vehicle $2\leq i\leq N-1$, it measures   the relative states  between itself and its front vehicle $i-1$. 
	In this section, for each vehicle $i$,  $i\in\{2,\dots,N\}$,  we will first design a local state observer to estimate its velocity and position, then develop two local attack detectors to find out whether it is under attack, and finally provide a distributed controller by employing the estimates from neighboring vehicles to achieve the vehicle platooning in \eqref{eq_platoon}.

	\subsection{Observer design}

	%
	% Each vehicle $i$, $i=2,\dots,N-1$, is able to receive the measurements (including $y_{i,i}$ and $y_{i-1,i}$) of  vehicle $i-1$ in the front and vehicle $i+1$ in the rear, respectively. 
	
	%Thus, vehicle $3$ can communicate with vehicles $1,2,4$, vehicle $N-2$ can communicate with vehicles $N-3,N-1,N$. For convenience, we assume vehicles $3$ (res. $N-2$)  will not treat vehicle $1$ (res. $N$) as its neighbor. 
	
	%Since vehicle $i$,  $i\in\{2,\dots,N\}$, has the measurements of itself and its front vehicle, we assume that it  transmits/receives all the measurements to/from its  neighbors.
	
	%We consider the scenario of $n$ neighbor communication, where each vehicle is able to communicate with its nearest $n$ vehicles in the front and  in the rear, respectively. For the vehicles without enough neighbors (e.g., when $n\geq 1$, vehicle $N$ has no neighbors in the rear), they just communicate with their existing neighbors.
	%Since vehicle $i$,  $i\in\{2,\dots,N\}$, has the measurements of itself and its front vehicle, we assume that it  transmits all the measurements to its $N$ neighbors.

	% for the leader and tail vehicles, we have their measurement equations in the following
	%\begin{align}
	%  z_1(t)=x_1(t)+d_{1,1}(t)\\
	%    z_N(t)=x_N(t)+d_{N,N}(t)\\
	%\end{align}
	%where 
	
	%For consistent notations, we let 
	%\begin{align*}
	% y_{i|i}=y_{i,i},d_{i|i}=d_{i,i},  i\in\{2,\dots,N\}.
	%\end{align*}
	
	By \eqref{eq_system4}--\eqref{eq_system_1},  the measurement equation for  vehicle $i$,  $i\in\{1,\dots,N\}$,  can be written in the following:
	\begin{equation}\label{eq_system5}
	\begin{split}
	z_i(t)&=Cx_i(t)+\boldsymbol{a}_i(t)+\boldsymbol{d}_i(t)
	\end{split}
	\end{equation}
	where 
	\begin{align}\label{notation}
	\begin{split}
	z_1(t)&=[y_{1,1}^T(t), y_{1|2}^T(t),y_{1|3}^T(t)]^T,\\
	\boldsymbol{a}_1(t)&=[a_{1}^T(t), a_{2}^T(t), a_{3}^T(t)]^T,\\
	\boldsymbol{d}_1(t)&=[d_{1,1}^T(t), d_{1|2}^T(t),d_{1|3}^T(t)]^T,\\
	%  z_N(t)&=[y_{N|N-2}^T(t), y_{N|N-1}^T(t),y_{N,N}^T(t)]^T,\\
	z_i(t)&=[y_{i|i-1}^T(t), y_{i,i}^T(t),y_{i|i+1}^T(t)]^T,2\leq i\leq N-1,\\
	\boldsymbol{a}_i(t)&=[a_{i-1}^T(t), a_{i}^T(t), a_{i+1}^T(t)]^T,\\
	\boldsymbol{d}_i(t)&=[d_{i|i-1}^T(t), d_{i,i}^T(t),d_{i|i+1}^T(t)]^T,\\
	z_N(t)&=[y_{N|N-2}^T(t), y_{N|N-1}^T(t),y_{N,N}^T(t)]^T,\\
	\boldsymbol{a}_N(t)&=[a_{N-2}^T(t), a_{N-1}^T(t),a_{N}^T(t)]^T,\\
	\boldsymbol{d}_N(t)&=[d_{N|N-2}^T(t), d_{N|N-1}^T(t),d_{N,N}^T(t)]^T,\\
	C&=\begin{pmatrix}
	I_{2\times 2}\\
	I_{2\times 2}\\
	I_{2\times 2}
	\end{pmatrix},
	\end{split} 
	\end{align}
	where the notations in \eqref{notation} are given in  Appendix \ref{app_derivation}. Note that $y_{i|j}$, $j=i-1,i+1$ stands for the absolute measurement of vehicle $i$ calculated with the relative measurement between vehicles $i$ and $j$.  
	
	Since the leader vehicle is control-free, i.e., $u_1(t)=0$, from \eqref{eq_local_state}, \eqref{eq_local_state2}, and  \eqref{eq_system5},  we have the reformulated state equation  and  measurement equation  of vehicle $i,i\in\{1,\dots,N\},$ in the following
	\begin{align}\label{system}
	\begin{split}
	x_i(t+1)&=Ax_i(t)+[0,Tu_i(t)]^T+n_{i}(t)\\
	z_i(t)&=Cx_i(t)+\boldsymbol{a}_i(t)+\boldsymbol{d}_i(t).
	\end{split}
	\end{align}
	We aim to design an observer for vehicle $i$ with two steps, namely, time update and measurement update.
	In the time update, for vehicle $i$, we let
	\begin{align}\label{eq_pred}
	\bar x_i(t)=A\hat x_i(t-1)+[0,Tu_i(t-1)]^T,
	\end{align}
	where  $\hat x_i(t)$ is the estimate of $x_i(t)$.
	Then we denote the measurement innovation of vehicle $i$  at time $t$ by $\eta_i(t)$, where
	\begin{align}\label{eq_innovation}
	\eta_i(t)=z_i(t)-C\bar x_i(t).
	\end{align}
	%For the leader and tail vehicles, the update equation is given as follows:
	%\begin{align}\label{alg_update2}
	%\hat x_{i}(t)=
	%\bar x_{i}(t)+ \rho_1 C^T\eta_i(t), i\in\{1,N\}
	%\end{align}
	
	To number the labels of vehicles in the measurement update equation, we let
	\begin{align}\label{eq_notation}
	\eta_i(t)&=:\left(\begin{smallmatrix}
	\eta_{i,j_1}^{[1]}(t)&\eta_{i,j_1}^{[2]}(t)&\eta_{i,j_2}^{[1]}(t)&\eta_{i,j_2}^{[2]}(t)&\eta_{i,j_3}^{[1]}(t)&\eta_{i,j_3}^{[2]}(t)
	\end{smallmatrix}\right)\nonumber\\
	K_i(t)&=:\diag\{k_{i,j_1}^{[1]}(t),k_{i,j_1}^{[2]}(t),k_{i,j_2}^{[1]}(t),k_{i,j_2}^{[2]}(t),\\
	&\qquad\qquad\qquad k_{i,j_3}^{[1]}(t),k_{i,j_3}^{[2]}(t)\},\nonumber
	\end{align}
	where $j_1,j_2,j_3\in\bar{\mathcal{N}}_i\cup\{i\}$. For example, from equation \eqref{eq_system5}, if $i=1$, $j_1=1,j_2=2,j_3=3$, and if $i=N$, $j_1=N-2,j_2=N-1,j_3=N$, otherwise,  $j_1=i-1,j_2=i,j_3=i+1$.
	%Denote $\eta_i^{[r]}(t)$ the $r$th element of $\eta_i(t)$, where $r=1,\dots,6$ for $i=1,\dots,N$. 
	Then  we design a saturation-like scheme to utilize the measurement $\eta_i(t)$ as follows.
	
	For each $m=\{1,2,3\}$, and for each $r=\{1,2\}$, 
	\begin{align}\label{eq_K}
	k_{i,j_m}^{[r]}(t)=\begin{cases}
	1,\qquad\qquad\text{ if } |\eta_{i,j_m}^{[r]}(t)|\leq\beta,\\
	\frac{\beta}{|\eta_{i,j_m}^{[r]}(t)|}, \quad\text{ otherwise},
	\end{cases}
	\end{align}
	where $\beta\geq 0$ is a saturation parameter to be designed in the following. 
	From equations \eqref{eq_notation} and \eqref{eq_K},  the update equation is given:
	\begin{align}\label{alg_update}
	\hat x_{i}(t)=
	\bar x_{i}(t)+ \frac{1}{2}C^TK_i(t)\eta_i(t), i\in\{1,\dots,N\}.
	\end{align}

	\subsection{Detector design}
	%In last subsection, a saturation based  estimation method is provided to estimate the state of each vehicle under potentially attacked measurements. Although this method can be employed directly, to improve the estimation performance and find out which vehicle is under attack, next we will study the attack detection problem. 
	Although the saturation based estimation method provided in the previous subsection can be employed directly to estimate vehicle state (position and speed), under potentially attacked GPS measurements, an attack detection protocol is provided in the following to identify the vehicle under attack. When the attacked vehicle is identified, we provide an attack-free estimation  approach by isolating the measurements of the attacked vehicle to improve the estimation performance. 
%	In the following, two detectors are provided under the condition that each vehicle has the knowledge of the bounds in Assumption \ref{ass_noise}.
	
	% {\color{red} EHS: I think it is better to remove this: First, we consider a detector  employing the relative measurements.}
	The  idea of the first detector is that for two attack-free vehicles, the norm  of the difference between the relative measurements and the absolute measurements is upper bounded by a scalar related to the noise bound. Otherwise, one of the two vehicles is under attack.
	
	\begin{lemma}\label{lem_relative}
		For each attack-free vehicle $i\geq 2$ and $i-1$, i.e., $i\in\{2,\dots,N\}\cap\mathcal{A}^c$,  if Assumption \ref{ass_noise} holds,
		\begin{align*}
		\norm{f_{i,i-1}(t)}\leq 3\mu,
		\end{align*}
		where $f_{i,i-1}(t)=y_{i-1,i}(t)+y_{i-1}(t)-y_i(t)$.
	\end{lemma}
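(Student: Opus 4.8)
The plan is to substitute the measurement models directly into $f_{i,i-1}(t)$ and exploit a cancellation of the unknown true states. Reading $y_{i-1}(t)$ and $y_i(t)$ as the absolute GPS measurements $y_{i-1,i-1}(t)$ and $y_{i,i}(t)$ of \eqref{eq_system2}, and using the relative measurement model \eqref{eq_system3}, I would expand
\begin{align*}
f_{i,i-1}(t)=&\,(x_i(t)-x_{i-1}(t)+d_{i-1,i}(t))\\
&+(x_{i-1}(t)+a_{i-1}(t)+d_{i-1,i-1}(t))\\
&-(x_i(t)+a_i(t)+d_{i,i}(t)).
\end{align*}

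The key observation is that the true-state terms cancel identically, $x_i(t)-x_{i-1}(t)+x_{i-1}(t)-x_i(t)=0$, so the residual depends only on attack signals and measurement noise. Next, since both vehicle $i$ and vehicle $i-1$ are attack-free, Assumption \ref{ass_attack} forces $a_{i-1}(t)=a_i(t)=0$, reducing the expression to $f_{i,i-1}(t)=d_{i-1,i}(t)+d_{i-1,i-1}(t)-d_{i,i}(t)$, a sum of three noise terms.

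Finally I would apply the triangle inequality together with the uniform noise bound $\max\{\norm{d_{j,j}(t)},\norm{d_{i-1,i}(t)}\}\leq\mu$ from Assumption \ref{ass_noise}, which bounds each of the three terms by $\mu$ and yields $\norm{f_{i,i-1}(t)}\leq 3\mu$. This argument is a single algebraic cancellation followed by a routine bound, so I do not anticipate any real technical difficulty; the only points demanding care are the bookkeeping of which terms carry attack signals versus noise, and verifying that the shorthand $y_{i-1}(t),y_i(t)$ denotes the absolute measurements so that the state contributions cancel exactly rather than merely approximately.
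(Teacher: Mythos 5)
Your proposal is correct and follows essentially the same route as the paper's proof: substitute the measurement models \eqref{eq_system2}--\eqref{eq_system3} into $f_{i,i-1}(t)$, cancel the true states, drop the attack terms since both vehicles are attack-free, and apply the triangle inequality with the bound $\mu$ on each of the three remaining noise terms. (Your sign $-d_{i,i}(t)$ is in fact the correct one; the paper's displayed $+d_{i,i}(t)$ is a harmless typo that does not affect the bound.)
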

	\begin{proof}
		By equation \eqref{eq_system2}, for $i\in\{2,\dots,N\}\cap\mathcal{A}^c$,
		\begin{equation}\label{pf_lem1}
		\begin{split}
		y_{i,i}(t)&=x_i(t)+d_{i,i}(t)\\
		y_{i-1,i-1}(t)&=x_{i-1}(t)+d_{i-1,i-1}(t).
		\end{split}
		\end{equation}
		From equation \eqref{eq_system3},
		\begin{equation}\label{pf_lem2}
		\begin{split}
		y_{i-1,i}(t)&=x_i(t)-x_{i-1}(t)+d_{i-1,i}(t).
		\end{split}
		\end{equation}
		Then substituting \eqref{pf_lem1} into \eqref{pf_lem2} yields
		\begin{equation}\label{pf_lem3}
		\begin{split}
		f_{i,i-1}(t)&=d_{i-1,i}(t)+d_{i-1,i-1}(t)+d_{i,i}(t).
		\end{split}
		\end{equation}
		Taking the norm of both sides of \eqref{pf_lem3} leads to the conclusion.
	\end{proof}
	
	Let $\Gamma_i(t)$ be the detected  set of the attacked vehicle  by vehicle $i$ at time $t$, which contains the label of the attacked vehicle.
	Let $\Theta_i(t)$ be the detected-with-doubt vehicle set at time $t$  by vehicle $i$,  which contains the labels of two vehicles and one of them is the attacked vehicle. At the initial time, we assume   $\Gamma_i(0)=\emptyset,$ and  $\Theta_i(0)=\emptyset,$ $i\in\{1,\dots,N\}$. The  sets $\Gamma_i(t)$ and $\Theta_i(t)$ are shared between neighboring vehicles. From Lemma \ref{lem_relative}, we propose the following detector to identify the attacked vehicle.
	\begin{detector}\label{detector1}
		For vehicle $i\in\{2,\dots,N-1\}$, test the inequalities in \ref{item1}) and \ref{item2}); for vehicle $i=1$, test the inequality in  \ref{item2}); for vehicle $i=N$, test the inequality in \ref{item1}): 
		\begin{enumerate}
			\item $\norm{f_{i,i-1}(t)}>3\mu$\label{item1}
			\item $\norm{f_{i+1,i}(t)}>3\mu$\label{item2}
			%	\item $\norm{f_{i-1,i-2}(t)}>3\mu$\label{item3}
		\end{enumerate}
		then 
		\begin{itemize}
			\item 
			if there is a time $t$ such that \ref{item1}) holds, either vehicle $i$ or $i-1$ is under attack, and we let $\Theta_i(t)=\{i-1,i\}$;
			\item 
			if there is a time $t$ such that \ref{item2}) holds, either vehicle $i$ or $i+1$ is under attack, and we let $\Theta_i(t)=\{i,i+1\}$;
			\item if there are two times under which \ref{item1}) and \ref{item2}) hold respectively, then vehicle $i$ is claimed to be under attack and we let $\Gamma_i(t)=\{i\}$;
			%	\item if there are two times under which \ref{item1}) and \ref{item3}) hold respectively, vehicle $i-1$ is claimed to be under attack and we let $\Gamma_i(t)=\{i-1\}$.
			%	\item if \ref{item1}) holds but \ref{item2}) and \ref{item3}) do not hold, each least
		\end{itemize}
	\end{detector} 
	\begin{proposition}
		Under Assumptions \ref{ass_noise}--\ref{ass_attack}, for Detector \ref{detector1}, all the three claims hold.
		%    \begin{itemize}
		%    	\item   all the three claims hold;
		%%    	\item 	 the inequalities in \ref{item2}) and \ref{item3})  can not be satisfied simultaneously.
		%    \end{itemize}
	\end{proposition}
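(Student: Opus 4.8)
The plan is to derive all three claims as logical consequences of the contrapositive of Lemma \ref{lem_relative} combined with the single-attacker structure imposed by Assumption \ref{ass_attack}. Lemma \ref{lem_relative} asserts that whenever two adjacent vehicles are both attack-free, the test quantity $\norm{f_{i,i-1}(t)}$ stays within $3\mu$; read in the contrapositive, a violation $\norm{f_{i,i-1}(t)}>3\mu$ forces at least one of vehicles $i-1$ and $i$ to be under attack. Because Assumption \ref{ass_attack} guarantees that the attacked set $\mathcal{A}=\{j\}$ is a singleton whose label $j$ is fixed for all time, ``at least one is attacked'' sharpens to ``the attacked vehicle $j$ is one of $\{i-1,i\}$.''

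First I would record the two contrapositive statements for the two test quantities. For $f_{i,i-1}$ this is immediate. For $f_{i+1,i}$ I would simply re-apply Lemma \ref{lem_relative} with the index $i$ replaced by $i+1$ (valid since $i+1\geq 2$ whenever $i\geq 1$, which covers every vehicle testing \ref{item2})), obtaining that if vehicles $i$ and $i+1$ are both attack-free then $\norm{f_{i+1,i}(t)}\leq 3\mu$, and hence its contrapositive.

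Claims 1 and 2 then follow directly. If \ref{item1}) holds at some time $t$, the contrapositive places $j\in\{i-1,i\}$, so the assignment $\Theta_i(t)=\{i-1,i\}$ correctly traps the attacked label; the argument for \ref{item2}) yielding $\Theta_i(t)=\{i,i+1\}$ is identical using the shifted quantity.

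Claim 3 is where the essential structure appears, and I expect it to be the only step needing care. Suppose \ref{item1}) holds at some time $t_1$ and \ref{item2}) holds at some possibly different time $t_2$. The two contrapositives give $j\in\{i-1,i\}$ and $j\in\{i,i+1\}$, respectively. The crucial point is that Assumption \ref{ass_attack} fixes the attacked label $j$ once and for all: although the injected signal $a_j(\cdot)$ may vary arbitrarily in time, the index $j$ does not drift between $t_1$ and $t_2$, so the very same $j$ must lie in both sets. Intersecting, $j\in\{i-1,i\}\cap\{i,i+1\}=\{i\}$, whence $j=i$ and the assignment $\Gamma_i(t)=\{i\}$ is correct. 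The subtlety to emphasize is exactly this time-invariance of $j$; were the attacked vehicle allowed to change over time, violations at distinct instants could not be combined and the intersection argument would collapse. Since Claim 3 arises only for interior vehicles $i\in\{2,\dots,N-1\}$, no boundary cases require separate treatment.
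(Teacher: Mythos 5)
Your proposal is correct and follows essentially the same route as the paper: the paper's (one-line) proof likewise derives claims 1 and 2 from the contrapositive of Lemma~\ref{lem_relative} under Assumption~\ref{ass_noise}, and claim 3 from intersecting the two candidate sets using the single, time-invariant attacked label guaranteed by Assumption~\ref{ass_attack}. Your write-up simply makes explicit the index shift for $f_{i+1,i}$ and the time-invariance of $j$ needed to combine violations at distinct times, both of which the paper leaves implicit.
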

	\begin{proof}
		From Assumption \ref{ass_noise} and Lemma \ref{lem_relative}, the first and second claims hold. Furthermore, under Assumption \ref{ass_attack}, the third  claim holds. 
		%	Under Assumption \ref{ass_attack}, there is one vehicle under attack, thus the second conclusion holds.
	\end{proof}

	The idea of the second detector is that for an attack-free vehicle, its measurement innovation is upper bounded by a sequence related to the estimation error. Otherwise, this vehicle is under attack. 
	Define the sequence $\{\rho(t)\}$ in the following
	\begin{align}\label{sequ}
	\rho(t+1)=(1-k(t+1))\norm{A}\rho(t)+Q,
	\end{align}
	where $\rho(0)=q,$ and 
	\begin{align}\label{k_seq}
	\begin{split}
	k(t+1)=&\min\{1,\frac{\beta}{\norm{A}\rho(t)+\epsilon+\mu}\},\\
	Q=&\frac{3}{2}(\epsilon+\mu)+\frac{\sqrt{2}}{2}\beta.
	\end{split}
	\end{align}
	\begin{lemma}\label{lem_detec2}
		Under Assumption \ref{ass_noise}, for each  vehicle $i\in\{1,\dots,N\}$, 
		\begin{itemize}
			\item $\norm{\hat x_{i}(t)-x(t)}\leq \rho(t)$;
			\item if vehicle $i$ is attack-free, i.e., $i\in\mathcal{A}^c$, then for $t\geq 1,$
			$$	\norm{y_{i,i}(t)-A\hat x_{i}(t-1)}\leq \norm{A}\rho(t-1)+\epsilon+\mu.$$
		\end{itemize}
		where $\rho(t)$ is in \eqref{sequ}.
	\end{lemma}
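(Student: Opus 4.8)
The plan is to establish the first bound $\norm{\hat x_i(t)-x_i(t)}\le\rho(t)$ by induction on $t$ and then read off the second bound as a direct consequence. Throughout I write $e_i(t)=\hat x_i(t)-x_i(t)$ for the estimation error and $\tilde e_i(t)=\bar x_i(t)-x_i(t)$ for the one-step prediction error. The base case $t=0$ is exactly the first inequality of Assumption \ref{ass_noise}, since $\rho(0)=q$. For the inductive step I assume $\norm{e_i(t-1)}\le\rho(t-1)$; subtracting the true dynamics \eqref{system} from the time update \eqref{eq_pred} gives $\tilde e_i(t)=Ae_i(t-1)-n_i(t-1)$, so Assumption \ref{ass_noise} and the hypothesis yield $\norm{\tilde e_i(t)}\le\norm{A}\rho(t-1)+\epsilon$.

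Next I substitute $z_i(t)=Cx_i(t)+\boldsymbol a_i(t)+\boldsymbol d_i(t)$ into the innovation \eqref{eq_innovation} to obtain $\eta_i(t)=-C\tilde e_i(t)+\boldsymbol a_i(t)+\boldsymbol d_i(t)$, and feed this into the measurement update \eqref{alg_update}. Since $C$ stacks three copies of $I_{2\times2}$ (see \eqref{notation}), $C^{T}K_i(t)\eta_i(t)$ is the sum of the three $2$-dimensional blocks $w_m(t)=(k_{i,j_m}^{[1]}(t)\eta_{i,j_m}^{[1]}(t),\,k_{i,j_m}^{[2]}(t)\eta_{i,j_m}^{[2]}(t))^{T}$, $m=1,2,3$, so that $e_i(t)=\tilde e_i(t)+\tfrac12\sum_{m=1}^{3}w_m(t)$. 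By the saturation rule \eqref{eq_K} every entry of each $w_m(t)$ has magnitude at most $\beta$, hence $\norm{w_m(t)}\le\sqrt2\,\beta$ regardless of the measurements; this is the uniform cap that will control a corrupted block.

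The crux, and the step I expect to be hardest, is handling the saturation nonlinearity and matching the constants in \eqref{sequ}--\eqref{k_seq}. Here I invoke Assumption \ref{ass_attack}: since only one vehicle is attacked, at most one of the three blocks of $\eta_i(t)$ carries a nonzero attack term, so at least two blocks are attack-free. I reserve the one (possibly attacked) block and bound its contribution by the cap, $\tfrac12\norm{w_m(t)}\le\tfrac{\sqrt2}{2}\beta$, which is precisely the $\tfrac{\sqrt2}{2}\beta$ term in $Q$. For the two attack-free blocks I have $\eta_{i,j_m}^{[r]}(t)=-\tilde e_i^{[r]}(t)+d^{[r]}(t)$, so $|\eta_{i,j_m}^{[r]}(t)|\le\norm{A}\rho(t-1)+\epsilon+\mu$, and comparison with \eqref{eq_K}--\eqref{k_seq} forces the realized gains to satisfy $k_{i,j_m}^{[r]}(t)\ge k(t)$. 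Combining these two blocks with $\tilde e_i(t)$ produces the diagonal matrix $I-\tfrac12(\Lambda_{m_1}+\Lambda_{m_2})$ acting on $\tilde e_i(t)$, whose $r$-th entry equals $1-\tfrac12(k_{i,j_{m_1}}^{[r]}(t)+k_{i,j_{m_2}}^{[r]}(t))\in[0,\,1-k(t)]$; hence this part contracts $\tilde e_i(t)$ by a factor at most $1-k(t)$, giving the term $(1-k(t))\norm{A}\rho(t-1)$. The residual uncontracted process noise together with the measurement noise of these blocks is then collected into $\tfrac32(\epsilon+\mu)$, completing $Q$ and closing the induction $\norm{e_i(t)}\le(1-k(t))\norm{A}\rho(t-1)+Q=\rho(t)$.

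Finally, the second bound is immediate from the first. For an attack-free vehicle $i$ the central block of the measurement is $y_{i,i}(t)=x_i(t)+d_{i,i}(t)$, so with $\bar x_i(t)=A\hat x_i(t-1)+[0,Tu_i(t-1)]^{T}$ the corresponding innovation block is $y_{i,i}(t)-\bar x_i(t)=-\tilde e_i(t)+d_{i,i}(t)$; taking norms and using $\norm{\tilde e_i(t)}\le\norm{A}\rho(t-1)+\epsilon$ together with $\norm{d_{i,i}(t)}\le\mu$ from Assumption \ref{ass_noise} gives the stated bound $\norm{A}\rho(t-1)+\epsilon+\mu$.
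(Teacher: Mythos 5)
Your proof is correct and follows essentially the same route as the paper's: induction on $t$, with the innovation split into attack-free blocks (whose saturation gains are lower-bounded by $k(t)$ and therefore contract the prediction error by a factor at most $1-k(t)$) and the possibly attacked block (whose contribution is capped at $\frac{\sqrt{2}}{2}\beta$ by the saturation rule), recovering $\rho(t)=(1-k(t))\norm{A}\rho(t-1)+Q$. The only difference is bookkeeping: the paper lumps all attack-free gains into a single diagonal matrix $\hat K_i(t)$ and isolates the attacked block via $\bar K_{i,\mathcal{A}}(t)$, whereas you use exactly two blocks for the contraction and reserve one for the cap, which in fact keeps the contraction factor cleanly inside $[0,1-k(t)]$.
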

	\begin{proof}
		See  Appendix.
	\end{proof}
	Based on Lemma \ref{lem_detec2}, we provide the following detector.
	\begin{detector}\label{detector2}
		We claim vehicle $i$ is  under attack and let $\Gamma_i(t)=\{i\}$, if 
		\begin{align*}
		\norm{y_{i,i}(t)-A\hat x_{i}(t-1)}>\norm{A}\rho(t-1)+\epsilon+\mu,
		\end{align*}
		where $\rho(t-1)$ is in \eqref{sequ}.
	\end{detector}
	%
	%Note that Detector \ref{detector2} is provided under the condition that the estimation error is upper bounded, thus conditions to ensure the boundedness of the estimation error are required and the scalar $\tau$ needs a careful design.

	If the attacked vehicle, e.g., vehicle $s\in\Gamma_i$, is found out by Detector \ref{detector2}, or two potentially attacked vehicles are found out by Detector \ref{detector1}, e.g., vehicles $l,l+1\in\Theta_i(t)$,   to remove the influence of the measurements of vehicle $s$ or vehicles $l,l+1$ to the estimation performance,   the observer gains ${K_{i}(t)}_{i=1}^N$ of all the vehicles are supposed to be adjusted as follows. Recall the form of $K_i(t)$ in \eqref{eq_notation}: 
	\begin{align}\label{eq_notation2}
	K_i(t)&=:\diag\{k_{i,j_1}^{[1]}(t),k_{i,j_1}^{[2]}(t),k_{i,j_2}^{[1]}(t),k_{i,j_2}^{[2]}(t),\\
	&\qquad\qquad\qquad k_{i,j_3}^{[1]}(t),k_{i,j_3}^{[2]}(t)\}.\nonumber
	\end{align}
	Then we provide the design of the elements $k_{i,j_m}^{[r]}(t)$, $r=1,2,$ and $m=1,2,3$ in the following.
	
	If $\Gamma_i(t)=\{s\}$, let 
	\begin{align}\label{eq_K20}
	\begin{split}
	k_{i,s}^{[1]}(t)&=0,k_{i,s}^{[2]}(t)=0,\forall s\in\hat{\mathcal{N}}_i,\\
	k_{i,l}^{[1]}(t)&=1,k_{i,l}^{[2]}(t)=1, l\neq s, l\in \hat{\mathcal{N}}_i,
	\end{split}
	\end{align}
	where $\hat{\mathcal{N}}_i=\bar{\mathcal{N}}_i\cup \{i\}.$
	
	If $\Gamma_i(t)= \emptyset$ but $\Theta_i(t)\neq \emptyset$, let
	\begin{align}\label{eq_K21}
	\begin{split}
	k_{i,s}^{[1]}(t)&=0,k_{i,s}^{[2]}(t)=0,\forall s\in\hat{\mathcal{N}}_i\cap \Theta_i(t),\\
	k_{i,l}^{[1]}(t)&=1,k_{i,l}^{[2]}(t)=1, l\neq s, l\in \hat{\mathcal{N}}_i.
	\end{split}
	\end{align}
	In other cases, the design of $K_{i}(t)$ remains the same as equation \eqref{eq_K}, i.e.,
	for each $m=\{1,2,3\}$, and for each $r=\{1,2\}$, 
	\begin{align}\label{eq_K2}
	k_{i,j_m}^{[r]}(t)=\begin{cases}
	1,\qquad\qquad\text{ if } |\eta_{i,j_m}^{[r]}(t)|\leq\beta,\\
	\frac{\beta}{|\eta_{i,j_m}^{[r]}(t)|}, \quad\text{ otherwise}.
	\end{cases}
	\end{align}
	%for each $m=\{1,2,3\}$, and for each $r=\{1,2\}$, 
	%\begin{align}\label{eq_K2}
	%k_{i,j_m}^{[r]}(t)=\begin{cases}
	%1,\qquad\qquad\text{ if } |\eta_{i,j_m}^{[r]}(t)|\leq\beta,\\
	%\frac{\beta}{|\eta_{i,j_m}^{[r]}(t)|}, \quad\text{ otherwise}.
	%\end{cases}
	%\end{align}
	%for each $r$, 
	%\begin{align}\label{eq_K2}
	%k_{i}^{[r]}(t)=\begin{cases}
	%0,\text{ if $r$ corresponds to the detected vehicle},\\
	%1, \text{ otherwise}.
	%\end{cases}
	%\end{align} 
	Based on the designed $K_i(t)$, all the measurements of the attack-free vehicles will not be saturated and the measurements of the possible attacked vehicles (e.g., the detected vehicles) will no longer be employed.

	%\textbf{Design of observer}
	%\begin{align*}
	%k_{i}(t)=\min\{1,\frac{\beta}{\norm{Y_i(t)-C_{i}\bar X_i(t)}}\\
	%\hat x_i(t)=\bar x_i(t)+k_i(t)C_{i}^T(z_i(t)-C_{i}\bar x_i(t))
	%\end{align*}

	\subsection{Controller design}
	%\textbf{Estimate communication for controller design:}
	Next, we aim to design the control input based on the neighbor estimates and the desired relative position distance  between two neighboring vehicles. 
	Denote $\mathcal{N}_i$ the set of vehicle(s) nearest  to vehicle $i$, i.e., 
	\begin{align}\label{eq_measurement_notation}
	\mathcal{N}_i=
	\begin{cases}
	\{i-1,i+1\},&\text{if } i\in\{2,3,\dots,N-1\}\\
	\{2\}, &\text{if } i=1\\
	\{N-1\}, &\text{if } i=N.
	\end{cases}
	\end{align}
	Note that $\mathcal{N}_i\subset \bar{\mathcal{N}}_i$, $\forall i\in\{1,2,\dots,N\}.$

	Assume  $\hat s_{i}(t)$ and $\hat v_{i}(t)$ are the estimates of  $s_{i}(t)$ and $v_{i}(t)$, respectively. Let $t_i^*$ be the start time for the control input. For $0\leq t< t_i^*$, we let $u_{i}(t)=0.$ 
	Regarding the acceleration $u_i(t)$ in  \eqref{eq_pred}, by employing the predicted estimates $\bar x_j(t)=[\bar s_{j}(t),\bar v_{j}(t)]^T$ from the vehicle $j\in\mathcal{N}_i$, vehicle $i$ is equipped with the following acceleration input, for $t\geq t_i^*\geq 0$,  
	\begin{align}\label{eq_control}
	\begin{split}
	u_{i}(t)=&\sum_{j\in\mathcal{N}_i}\big(g_s(\bar s_{j}(t)-\hat s_{i}(t)+\Delta_{i,j})\\
	&+g_v(\bar v_{j}(t)-\hat v_{i}(t))\big), 2\leq i\leq N,
	\end{split}
	\end{align}
	where $g_s$ and $g_v$ are positive scalars to be determined.
	Note that if the initial estimates are very accurate, we can choose $t_i^*=0$. Otherwise, setting a larger $t^*$ can lead to better estimates but need more time to achieve the platooning of vehicles.
	
	% $\hat u_{i,j}(t)=[0,\Delta t\tilde u_{i,j}(t)]^T$.

	Based on the observer, the two detectors and the distributed controller, for the system \eqref{system}, we propose the secure platooning algorithm for vehicle $i$ in Algorithm \ref{alg:A}.

	%
	%\textbf{Design of observer}
	%
	%\begin{align}
	%\begin{split}
	%\hat X_i(t+1)&=F\hat X_i(t)+k_i(t)C_{i}^T(Y_i(t+1)-C_{i}F\hat X_i(t))\\
	%k_i(t)&=\min\{1,\frac{\beta}{\norm{Y_i(t+1)-C_{i}F\hat X_i(t)}}\}
	%\end{split}
	%\end{align}

	\begin{algorithm}
		\caption{Secure Autonomous Vehicle Platooning:}
		\label{alg:A}
		\begin{algorithmic}[1]
			\STATE {\textbf{Initial setting:} ($\bar x_{i}(0),\beta,g_s,g_v$)}\\		\vskip 2pt
			\STATE {\textbf{Communications between neighboring vehicles:} } Vehicle $i$ sends out its measurements $y_{i-1,i}(t)$, $y_{i,i}(t)$, the sets $\Theta_i(t-1)$ and $\Gamma_i(t-1)$, and the estimate $\bar x_i(t)$ to its neighboring vehicle $j\in\bar{\mathcal{N}}_i$. 
			Let $\Gamma_i(t)=\cup_{j\in\bar{\mathcal{N}}_i}\Gamma_j(t-1)\cup \Gamma_i(t-1)$ and $\Theta_i(t)=\cup_{j\in\bar{\mathcal{N}}_i}\Theta_j(t-1)\cup \Theta_i(t-1)$.
			Then vehicle $i$ runs Detector \ref{detector1}. \\ \vskip 2pt
			\STATE {\textbf{Measurement update:} Vehicle $i$, $i\in\{1,\dots,N\}$, uses the measurement $z_i(t)$ in \eqref{eq_system5} to update its own state estimate}
			$$\eta_i(t)=z_i(t)-C\bar x_i(t).$$
			Run Detector \ref{detector2} to update $\Theta_i(t)$ and $\Gamma_i(t)$. \\
			The estimate is updated in the following,
			%If $\Gamma_i(t)=\{s\}\neq \emptyset$, let 
			%\begin{align*}
			%k_{i,s}^{[1]}(t)&=0,k_{i,s}^{[2]}(t)=0,\forall s\in\bar{\mathcal{N}}_i,\\
			%k_{i,l}^{[1]}(t)&=1,k_{i,l}^{[2]}(t)=1, l\neq s, l\in \bar{\mathcal{N}}_i\cup \{i\}.
			%\end{align*}
			%If $\Gamma_i(t)= \emptyset$ but $\Theta_i(t)\neq \emptyset$, let
			%\begin{align*}
			%k_{i,s}^{[1]}(t)&=0,k_{i,s}^{[2]}(t)=0,\forall s\in\bar{\mathcal{N}}_i\cup \Theta(t),\\
			%k_{i,l}^{[1]}(t)&=1,k_{i,l}^{[2]}(t)=1, l\neq s, l\in \bar{\mathcal{N}}_i\cup \{i\}.
			%\end{align*}
			%Otherwise, for each $m=\{1,2,3\}$, and for each $r=\{1,2\}$, 
			%\begin{align}\label{eq_K3}
			%k_{i,j_m}^{[r]}(t)=\min\{1,\frac{\beta}{|\eta_{i,j_m}^{[r]}(t)|}\}.
			%\end{align}  
			\begin{align*}
			%K_i(t)&=\diag\{k_{i}^{[r]}(t)\}_{r=1}^{6}\\
			\hat x_{i}(t)&=
			\bar x_{i}(t)+ \frac{1}{2} C^TK_i(t)\eta_i(t),
			\end{align*}		
			where $K_i(t)$ is designed by  \eqref{eq_notation2}--\eqref{eq_K2}.
			%		\begin{align*}
			%\eta_i(t)&=z_i(t)-C\bar x_i(t)\\
			%%\text{If }  i\in\{1,&N\} ,\\
			%%\hat x_{i}(t)&=
			%%\bar x_{i}(t)+ \rho_1 C^T\eta_i(t),\\
			%%\text{Else}\qquad&\\
			% k_{i}^{[r]}(t)&=\begin{cases}
			% \min\{1,\frac{\beta}{|\eta_i^{[r]}(t)|}\},  \text{if } \Gamma=\emptyset\\
			% \text{see \eqref{eq_K2}}, \text{otherwise}.
			% \end{cases}\\
			%K_i(t)&=\diag\{k_{i}^{[r]}(t)\}_{r=1}^{6}\\
			%\hat x_{i}(t)&=
			%\bar x_{i}(t)+ \frac{1}{2} C^TK_i(t)\eta_i(t).
			%		\end{align*}
			%		$k_{i}(t)=\diag\{k_{i,1}(t),k_{i,2}(t),\dots,k_{i,m_i}(t)\}$\\
			\STATE {\textbf{Control input:} Vehicle $i$ uses its own estimate $\hat x_i(t)$ and the predicted estimates $\bar x_j(t),j\in\mathcal{N}_i,$ from its neighbors to design the input $u_i(t)$: $u_{i}(t)=0$, if $0\leq t\leq t_i^*$;  for $t\geq t_i^*$, }
			\begin{align*}
			\begin{split}
			u_{i}(t)=&\sum_{j\in\mathcal{N}_i}\big(g_s(\bar s_{j}(t)-\hat s_{i}(t)+\Delta_{i,j})\\
			&+g_v(\bar v_{j}(t)-\hat v_{i}(t))\big), 2\leq i\leq N.
			\end{split}
			\end{align*}
			\STATE {\textbf{Time update:} }\\	\vskip 2pt
			$\bar x_i(t+1)=A\hat x_i(t)+[0,Tu_i(t)]^T$.
		\end{algorithmic}
	\end{algorithm}
	
	\section{Algorithm Performance}\label{sec:analysis}
	In this section, we would like to study the performance of Algorithm \ref{alg:A}. First, we study the condition to ensure the asymptotically  bounded estimation error. Then, we provide the design for the control parameters such that the platooning error of vehicles are asymptotically bounded.
	
	In the following theorem, we study the boundedness of the estimation error for Algorithm \ref{alg:A}.
	\begin{theorem}\label{thm_estimation}
		Consider the system \eqref{system} satisfying Assumptions \ref{ass_noise}--\ref{ass_attack}, and Algorithm \ref{alg:A}. 
		If the following condition holds
		\begin{align}\label{condition0}
		(1-k^*)\norm{A}q+\frac{3}{2}(\epsilon+\mu)+\frac{\sqrt{2}}{2}\beta<q,
		\end{align}
		then   the estimation error is uniformly upper bounded, i.e., 
		%	\begin{align*}
		%	\limsup\limits_{t\rightarrow \infty}\norm{\hat x_i(t)-x(t)}\leq \infty.
		%	\end{align*}
		\begin{align*}
		%		&\limsup\limits_{t\rightarrow \infty}| \hat x_i(t)-x_i(t)|\leq \alpha_1,i\in\{1,N\}\\
		&\limsup\limits_{t\rightarrow \infty}\norm{\hat x_i(t)-x_i(t)}\leq \alpha, i\in\{1,\dots,N\},
		%	&\limsup\limits_{t\rightarrow \infty}| \hat s_i(t)-s_i(t)|\leq \alpha_1, \\
		%	&\limsup\limits_{t\rightarrow \infty}|\hat v_i(t)-v_i(t)|\leq \alpha,
		\end{align*}
		where $	k^*=\min\{1,\frac{\beta}{\norm{A}q+\epsilon+\mu}\}$, and $\alpha$ is given in  \eqref{alpha_2}.
	\end{theorem}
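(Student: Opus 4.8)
The plan is to lean on Lemma \ref{lem_detec2}, whose first bullet already supplies the pointwise bound $\norm{\hat x_i(t)-x_i(t)}\le\rho(t)$ for every vehicle $i$ regardless of the attack (the saturation in \eqref{eq_K}/\eqref{eq_K2} is what keeps each innovation contribution below $\beta$, so this bound holds even when the attacked vehicle's measurement is used). This reduces the theorem to a purely scalar statement about the comparison sequence $\{\rho(t)\}$ in \eqref{sequ}--\eqref{k_seq}: it suffices to show that $\rho(t)$ converges to the value $\alpha$ in \eqref{alpha_2}, since then $\limsup_{t\to\infty}\norm{\hat x_i(t)-x_i(t)}\le\limsup_{t\to\infty}\rho(t)=\alpha$.

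First I would write the recursion as $\rho(t+1)=F(\rho(t))$, where $F(\rho):=(1-k)\norm{A}\rho+Q$ with $k=\min\{1,\beta/(\norm{A}\rho+\epsilon+\mu)\}$ and $Q$ as in \eqref{k_seq}. Substituting $h=\norm{A}\rho$, this splits into two pieces: $F=Q$ on the fully saturated region $\{h+\epsilon+\mu\le\beta\}$, and $F=Q+\phi(h)$ with $\phi(h)=h-\beta h/(h+\epsilon+\mu)$ on the complementary region. The key step is to prove that $F$ is monotonically non-decreasing on $[0,\infty)$, which reduces to checking $\phi'(h)=1-\beta(\epsilon+\mu)/(h+\epsilon+\mu)^2\ge 0$ on the unsaturated region. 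Since $\phi'$ is increasing in $h$, it is enough to verify non-negativity at the left endpoint of that region: at $h=\beta-\epsilon-\mu$ when $\beta\ge\epsilon+\mu$ one gets $\phi'=1-(\epsilon+\mu)/\beta\ge 0$, and at $h=0$ when $\beta<\epsilon+\mu$ one gets $\phi'=1-\beta/(\epsilon+\mu)>0$; continuity of $F$ across the junction (where $\phi=0$ and $F=Q$) glues the pieces. I expect this monotonicity check to be the main, if modest, obstacle, since it is the single place where the interplay between the saturation threshold $\beta$ and the noise level $\epsilon+\mu$ must be handled with care.

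With monotonicity established, I would observe that condition \eqref{condition0} is exactly $F(q)<q$, i.e., $\rho(1)<\rho(0)=q$ (because the left-hand side of \eqref{condition0} equals $(1-k(1))\norm{A}\rho(0)+Q=F(q)$). A one-line induction then gives $\rho(t+1)=F(\rho(t))\le F(\rho(t-1))=\rho(t)$, so $\{\rho(t)\}$ is non-increasing; it is bounded below by $Q>0$ since $(1-k)\norm{A}\rho(t)\ge 0$. By the monotone convergence theorem $\rho(t)\downarrow\alpha$ for some $\alpha\ge Q$, and continuity of $F$ forces $\alpha=F(\alpha)$.

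Finally, solving the fixed-point equation $\alpha=(1-k_\alpha)\norm{A}\alpha+Q$ with $k_\alpha=\min\{1,\beta/(\norm{A}\alpha+\epsilon+\mu)\}$ pins down $\alpha$: clearing the denominator $\norm{A}\alpha+\epsilon+\mu$ turns the unsaturated case into a quadratic in $\alpha$, whose relevant root is the explicit value reported in \eqref{alpha_2}. Combining this limit with the bound from Lemma \ref{lem_detec2} yields $\limsup_{t\to\infty}\norm{\hat x_i(t)-x_i(t)}\le\alpha$ for all $i\in\{1,\dots,N\}$, completing the argument.
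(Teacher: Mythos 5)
Your reduction via Lemma \ref{lem_detec2} and your monotonicity analysis of the one-step map $F$ are both sound (the endpoint check $\phi'\ge 0$ does go through in both cases $\beta\ge\epsilon+\mu$ and $\beta<\epsilon+\mu$), and the monotone-convergence argument correctly yields $\rho(t)\downarrow\rho^*$ with $\rho^*=F(\rho^*)$. This is a genuinely different route from the paper, which never touches the fixed point of the nonlinear map: it shows by induction that $\rho(t)<q$ for all $t\ge 1$, so that $k(t+1)\ge k^*$, and then dominates $\rho$ by the \emph{linear} recursion $\rho(t+1)\le(1-k^*)\norm{A}\rho(t)+Q$, whose contraction factor $(1-k^*)\norm{A}<1$ follows from \eqref{condition0}; the $\alpha$ of \eqref{alpha_2} is precisely the fixed point of that linear comparison recursion.

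The gap is in your last step: the fixed point $\rho^*$ of $F$ is \emph{not} the value in \eqref{alpha_2}. Your quadratic in the unsaturated case has coefficients depending only on $\norm{A},\epsilon,\mu,\beta,Q$, so its root is independent of $q$, whereas $\alpha=Q/(1-(1-k^*)\norm{A})$ depends on $q$ through $k^*=\min\{1,\beta/(\norm{A}q+\epsilon+\mu)\}$; the two cannot coincide in general (in fact $\rho^*$ is a tighter bound). As written, your argument therefore does not establish the stated inequality $\limsup_{t\rightarrow\infty}\norm{\hat x_i(t)-x_i(t)}\le\alpha$ with $\alpha$ as in \eqref{alpha_2}. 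The repair is short: from $F(q)<q$ and monotone decrease you have $\rho^*\le\rho(1)<q$, hence $k_{\rho^*}:=\min\{1,\beta/(\norm{A}\rho^*+\epsilon+\mu)\}\ge k^*$, so $\rho^*=(1-k_{\rho^*})\norm{A}\rho^*+Q\le(1-k^*)\norm{A}\rho^*+Q$; since \eqref{condition0} forces $(1-k^*)\norm{A}<1$, this gives $\rho^*\le Q/(1-(1-k^*)\norm{A})=\alpha$ when $k^*\neq 1$, and $\rho^*\le Q=\alpha$ when $k^*=1$. Add that comparison and the argument is complete.
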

	\begin{proof}
		See Appendix.
	\end{proof}
	
	We note that if $\epsilon,\mu,T$ are sufficiently small, \eqref{condition0} is satisfied by choosing $\beta=p.$
	The boundedness of the vehicle platooning error is studied in the following theorem.
	\begin{theorem}\label{thm_control}
		Under the same conditions as Theorem \ref{thm_estimation}, 
		if 
		\begin{align}\label{eq_condition}
		\lambda_{\max}(\mathcal{L}_g)<\frac{4g_s}{g_v^2},	T<\frac{g_v}{g_s}
		\end{align}
		then for $i=2,\dots,N$, there exist constants $ \gamma_1$ and $\gamma_2$ such that
		\begin{align*}
		&\limsup\limits_{t\rightarrow \infty}|v_i(t)-v_1(t))|\leq \gamma_1,\\
		&\limsup\limits_{t\rightarrow \infty}|s_i(t)- s_{i-1}(t)+\Delta_{i,i-1}|\leq \gamma_2.
		\end{align*}
	\end{theorem}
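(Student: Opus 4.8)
The plan is to show that the platooning errors obey a Schur-stable linear recursion driven by a uniformly bounded disturbance, so that their $\limsup$ is finite, and then to read off $\gamma_1,\gamma_2$ from the steady-state gain. First I would pass to error coordinates measured against the leader: for $i=2,\dots,N$ set $\tilde v_i(t)=v_i(t)-v_1(t)$ and $\tilde s_i(t)=s_i(t)-s_1(t)-\Delta_i$, where $\Delta_i$ is the accumulated desired displacement of vehicle $i$ behind the leader (the cumulative sum of the inter-vehicle gaps), chosen so that $s_i-s_{i-1}+\Delta_{i,i-1}=\tilde s_i-\tilde s_{i-1}$. Then $v_i-v_1=\tilde v_i$ and $s_i-s_{i-1}+\Delta_{i,i-1}=\tilde s_i-\tilde s_{i-1}$, so a uniform bound on $\norm{(\tilde s_2,\dots,\tilde s_N,\tilde v_2,\dots,\tilde v_N)}$ immediately yields $\gamma_1$ and $\gamma_2$.

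Next I would substitute the controller \eqref{eq_control} into the closed loop \eqref{system} and split every estimate into the true state plus its error. By Theorem \ref{thm_estimation} the estimation error is asymptotically bounded by $\alpha$, and the one-step prediction error satisfies $\bar x_j(t)-x_j(t)=A(\hat x_j(t-1)-x_j(t-1))-n_j(t-1)$, so $\norm{\bar x_j(t)-x_j(t)}\le\norm{A}\alpha+\epsilon$ asymptotically. Replacing $\bar s_j,\bar v_j,\hat s_i,\hat v_i$ by the corresponding true states therefore introduces only terms bounded by a constant multiple of $\alpha,\epsilon,\mu$. Collecting these, together with the process noise $n_i(t)$ and the leader disturbance $n_1(t)$ (which appears because $v_1$ is itself noise-driven, so consensus is to a moving reference), I obtain a disturbance $w(t)$ with $\norm{w(t)}$ uniformly bounded by an explicit constant $c=c(\alpha,\epsilon,\mu,T,g_s,g_v)$.

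Stacking the errors as $E(t)=(\tilde s_2,\dots,\tilde s_N,\tilde v_2,\dots,\tilde v_N)^T$, the true-state feedback is a second-order consensus law on the leader-anchored follower graph, so the closed loop reads $E(t+1)=ME(t)+w(t)$, where $M$ is assembled from $A$, the gains $g_s,g_v$, and the grounded Laplacian $\mathcal{L}_g$ (positive definite because vehicle $2$ is pinned to the leader through $\mathcal{N}_2$). Since the same neighbor structure carries both the position and the velocity feedback, $M$ is simultaneously block-diagonalizable with $\mathcal{L}_g$ and decouples into the $2\times2$ modal blocks
\begin{align*}
M_k=\begin{pmatrix} 1 & T\\ -T g_s\lambda_k & 1-T g_v\lambda_k \end{pmatrix},\quad \lambda_k>0 .
\end{align*}
The crux is to show every $M_k$ is Schur. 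Its characteristic polynomial $z^2-(2-Tg_v\lambda_k)z+(1-Tg_v\lambda_k+T^2g_s\lambda_k)$ has discriminant $T^2\lambda_k(g_v^2\lambda_k-4g_s)$, so $\lambda_{\max}(\mathcal{L}_g)<4g_s/g_v^2$ forces a pair of complex-conjugate roots in every mode; their squared modulus then equals the constant term $1-Tg_v\lambda_k+T^2g_s\lambda_k$, which the second condition $T<g_v/g_s$ renders strictly less than $1$. Hence the spectral radius of $M$ lies below $1$. I expect the main effort to lie here: verifying positive definiteness of $\mathcal{L}_g$ for the pinned path interconnection and converting the modal spectral-radius bound into a uniform contraction $\norm{M^t}\le K\theta^t$ with $\theta\in(0,1)$ independent of the mode.

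Finally, with $M$ Schur there exist $K\ge1$ and $\theta\in(0,1)$ such that $\norm{M^t}\le K\theta^t$, and iterating $E(t+1)=ME(t)+w(t)$ gives $\norm{E(t)}\le K\theta^t\norm{E(0)}+\frac{K}{1-\theta}\,c$, whence $\limsup_{t\to\infty}\norm{E(t)}\le\frac{K}{1-\theta}c$. Since $|v_i-v_1|=|\tilde v_i|\le\norm{E(t)}$ and $|s_i-s_{i-1}+\Delta_{i,i-1}|=|\tilde s_i-\tilde s_{i-1}|\le2\norm{E(t)}$, this produces the claimed constants $\gamma_1$ and $\gamma_2$ and completes the argument.
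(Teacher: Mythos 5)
Your proposal is correct and follows essentially the same route as the paper: reduce to a stacked error recursion driven by a bounded disturbance, block-diagonalize the closed-loop matrix along the eigenvalues of the grounded Laplacian, verify that each $2\times2$ modal block has complex-conjugate roots (via the negative discriminant from $\lambda_{\max}(\mathcal{L}_g)<4g_s/g_v^2$) whose squared modulus equals the constant term $1-\lambda T g_v+\lambda T^2 g_s<1$ (via $T<g_v/g_s$), and then invoke BIBO stability. A minor point in your favor: you correctly identify the product of the conjugate roots as the \emph{squared} modulus, where the paper writes $|s|$ for that quantity.
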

	\begin{proof}
		See Appendix.
		%\begin{equation}\label{state_indi}
		%\begin{split}
		%s_i(t+1)&=s_i(t)+\Delta tv_i(t)+n_{s,i}(t)\\
		%v_i(t+1)&=v_i(t)+Tu_i(t)+n_{v,i}(t),i=1,\dots,N,\\
		%\end{split}
		%\end{equation}
	\end{proof}

	%According to \eqref{eq_system}, we have 
	%\begin{align}
	%&z_i(t)=C_{i}x(t)+\bar a_i(t), 
	%\end{align}
	%where 
	%\begin{align*}
	%x(t)&=\begin{pmatrix}
	%s_1(t)&v_1(t)&\dots&s_N(t)&v_N(t)
	%\end{pmatrix}^T\\
	%C_{i}&=e_i\otimes \begin{pmatrix}
	%1&0\\
	%1 &T
	%\end{pmatrix}\\
	%z_i(t)&=\begin{pmatrix}
	%y_i(t)&y_i(t+1)
	%\end{pmatrix}^T\\
	%\bar a_i(t)&=\begin{pmatrix}
	%a_i(t)&a_i(t+1)
	%\end{pmatrix}^T
	%\end{align*}
	%Here, $e_i\in\mathbb{R}^{1\times N}$ is the $i$th standard orthogonal basis, i.e., $$e_i=(0,\dots,0,1_{\text{$i$th element}},0,\dots,0)$$
	%
	
	\section{Numerical Simulations}\label{sec:simulation}
	In this section, we study numerical simulations to show the effectiveness of the proposed algorithm in the vehicle platooning under attacked GPS data.
	
	Suppose there are five vehicles, in which the GPS measurement data of vehicle $3$ is compromised by a malicious attacker.  The attacker will insert the signal  $a_3(t)=2\tilde y_{3,3}(t))$, where $\tilde y_{3,3}(t)$ is the attack-free GPS measurement of  vehicle $3$.
	We suppose the time interval $t=[0,1000]$ with sampling step $T=1$ second.
	Both the process noise  $n_i(t)$ and measurement noise $d_{i,j}(t)$, $j\in\mathcal{N}_i\cup \{i\}$, $i=1,\dots,5$,   follow the uniform distribution between $[-0.1,0.1]$. The bounds in Assumption \ref{ass_noise} are assumed to be $\mu=0.1, \epsilon=0.1,q=100.5.$ The initial state is $x_1(0)=(100, 10)^T,$ $x_2(0)=(60, 8)^T,$ $x_3(0)=(40, 6)^T,$ $x_4(0)=(20, 4)^T,$ $x_5(0)=(0, 2)^T$, whose  estimates are all $0^{2\times 1}$. The required distance between vehicles $i$ and $j\in\mathcal{N}_i$ is $\Delta_{i,j}=20$.

	We conduct a Monto Carlo experiment with $100$ runs. 
	Define the average estimation error in position and velocity by $\eta_{i,s}(t)$ and $\eta_{i,v}(t)$, respectively, where
	\begin{align*}
	\eta_{i,s}(t)&=\frac{1}{100}\sum_{j=1}^{100}|e_{i,s}^j(t)|,\\
	\eta_{i,v}(t)&=\frac{1}{100}\sum_{j=1}^{100}|e_{i,v}^j(t)|,
	%\eta(t)&=\frac{1}{100}\sum_{j=1}^{100}\max_{i\in\{1,\dots,30\}}\norm{e_i^j(t)},
	\end{align*}
	and $e_{i,s}^j(t)$ and $e_{i,v}^j(t)$ are the state estimation errors of vehicle $i$ in position and velocity, respectively, at time $t$ in the $j$th run.
	
	Moreover, we define the relative position and velocity between vehicle $i\in\{2,3,4,5\}$ and the leader vehicle $1$ by $\zeta_{i,s}(t)$ and $\zeta_{i,v}(t)$, respectively, where
	\begin{align*}
	\zeta_{i,s}(t)&=\frac{1}{100}\sum_{j=1}^{100}(s_{i}^j(t)-s_{1}^j(t)),\\
	\zeta_{i,v}(t)&=\frac{1}{100}\sum_{j=1}^{100}(v_{i}^j(t)-v_{1}^j(t)),
	%\eta(t)&=\frac{1}{100}\sum_{j=1}^{100}\max_{i\in\{1,\dots,30\}}\norm{e_i^j(t)},
	\end{align*}
	and $s_{i}^j(t)$ and $v_{i}^j(t)$ are the position and velocity of vehicle $i$, respectively, at time $t$ in the $j$th run.
	
	First, we use a conventional algorithm, which has the same controller as Algorithm \ref{alg:A} but all $k_{i,j_m)}^{[r]}(t)=1,r=1,2,m=1,2,3,i\in\{1,\dots,5\}$ in the observer. For the case that the GPS data of vehicle $3$ is under attack, the simulation result is given in Fig. \ref{fig:convent_estimation}. It shows that both the estimation error and the platooning error show serious fluctuations. The average crash number\footnote{The crash number is the times where the order  from the leader vehicle $1$ to the tail vehicle $N$ is different from $1,2,\dots,N$. } of the vehicles is 179. 
	For Algorithm \ref{alg:A}, we choose $g_s=g_v=0.5$, which satisfies the condition in \eqref{eq_condition}. The control input starts at time $t_i^*=50.$ Also, we choose $\beta=1.$ The simulation results are provided in Fig. \ref{fig:estimation} and Fig. \ref{fig:platoon}. The attacked vehicle $3$ is successfully detected at $t=2$. Fig. \ref{fig:estimation} shows that the estimation errors of all vehicles tend to zero. In Fig. \ref{fig:platoon}, the relative positions of vehicles are ensured and all the velocities of the following vehicles tend to the velocity of the leader vehicle $1$.

	\begin{figure}[t]
		\centering
		\includegraphics[width=1\linewidth]{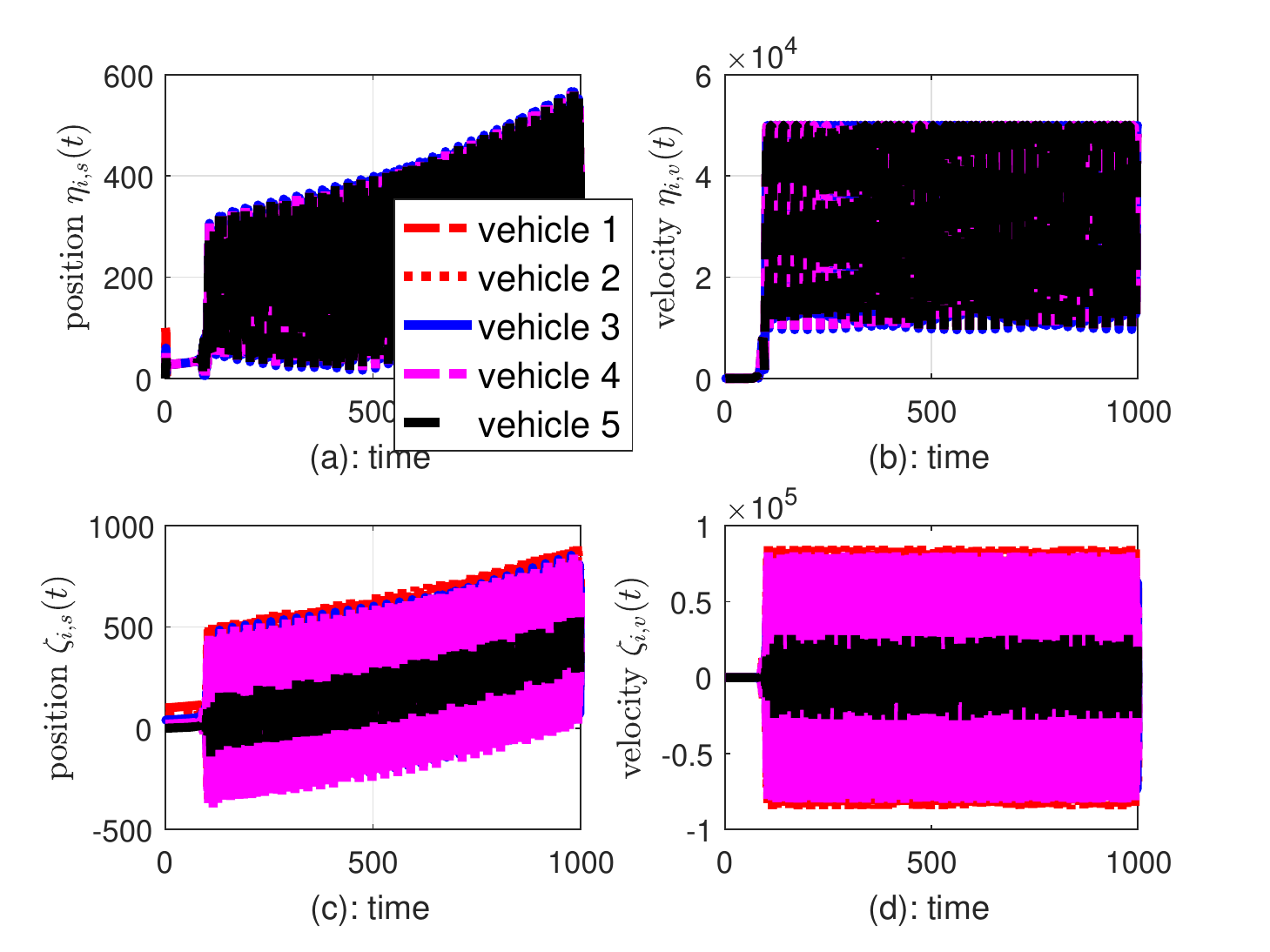}
		\caption{The conventional platooning of vehicles under attacks.}
		\label{fig:convent_estimation}
	\end{figure}
	\begin{figure}[t]
		\centering
		\includegraphics[width=1\linewidth]{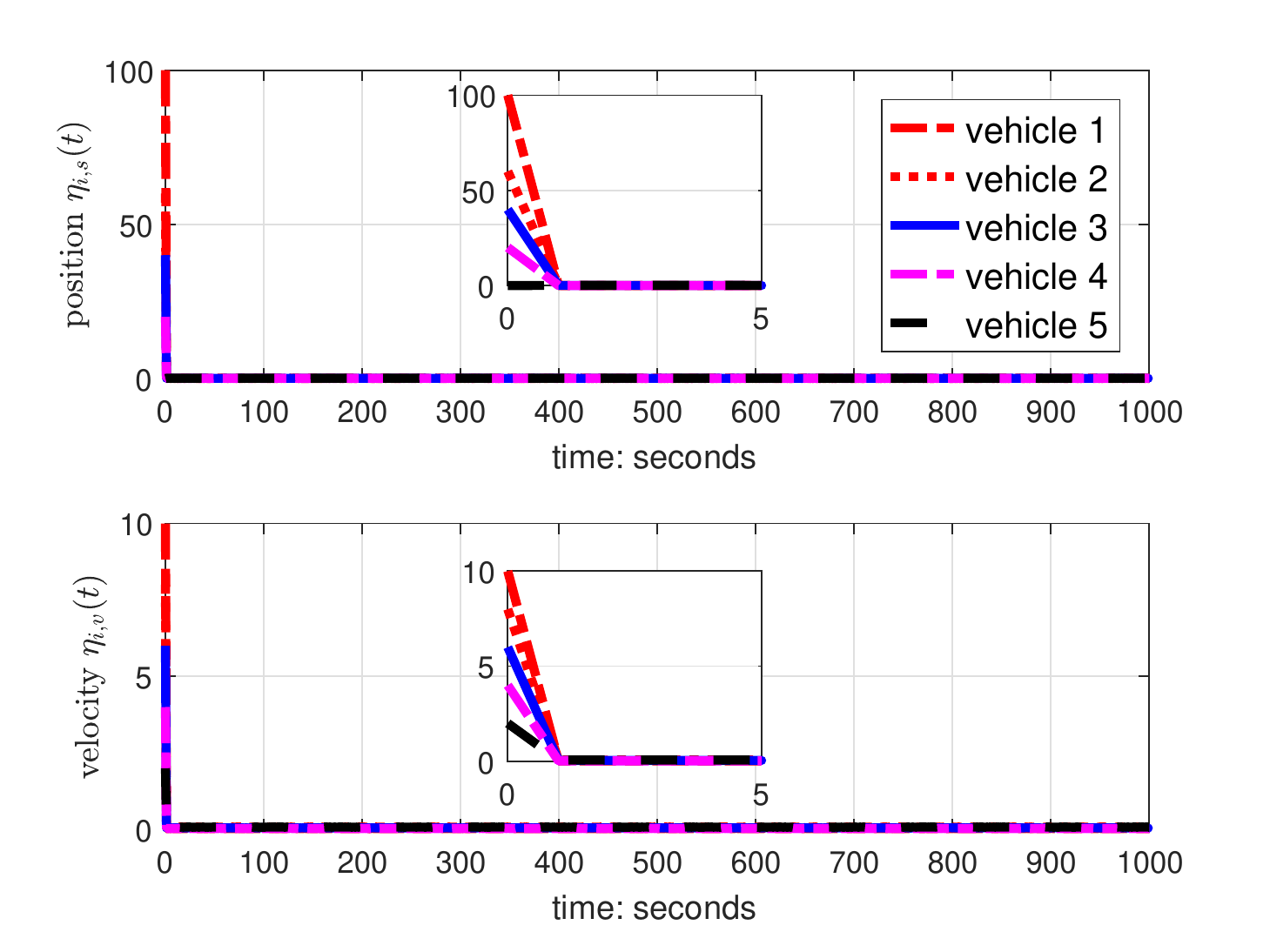}
		\caption{Estimation error of vehicles by Algorithm \ref{alg:A}.}
		\label{fig:estimation}
	\end{figure}
	\begin{figure}[t]
		\centering
		\includegraphics[width=1\linewidth]{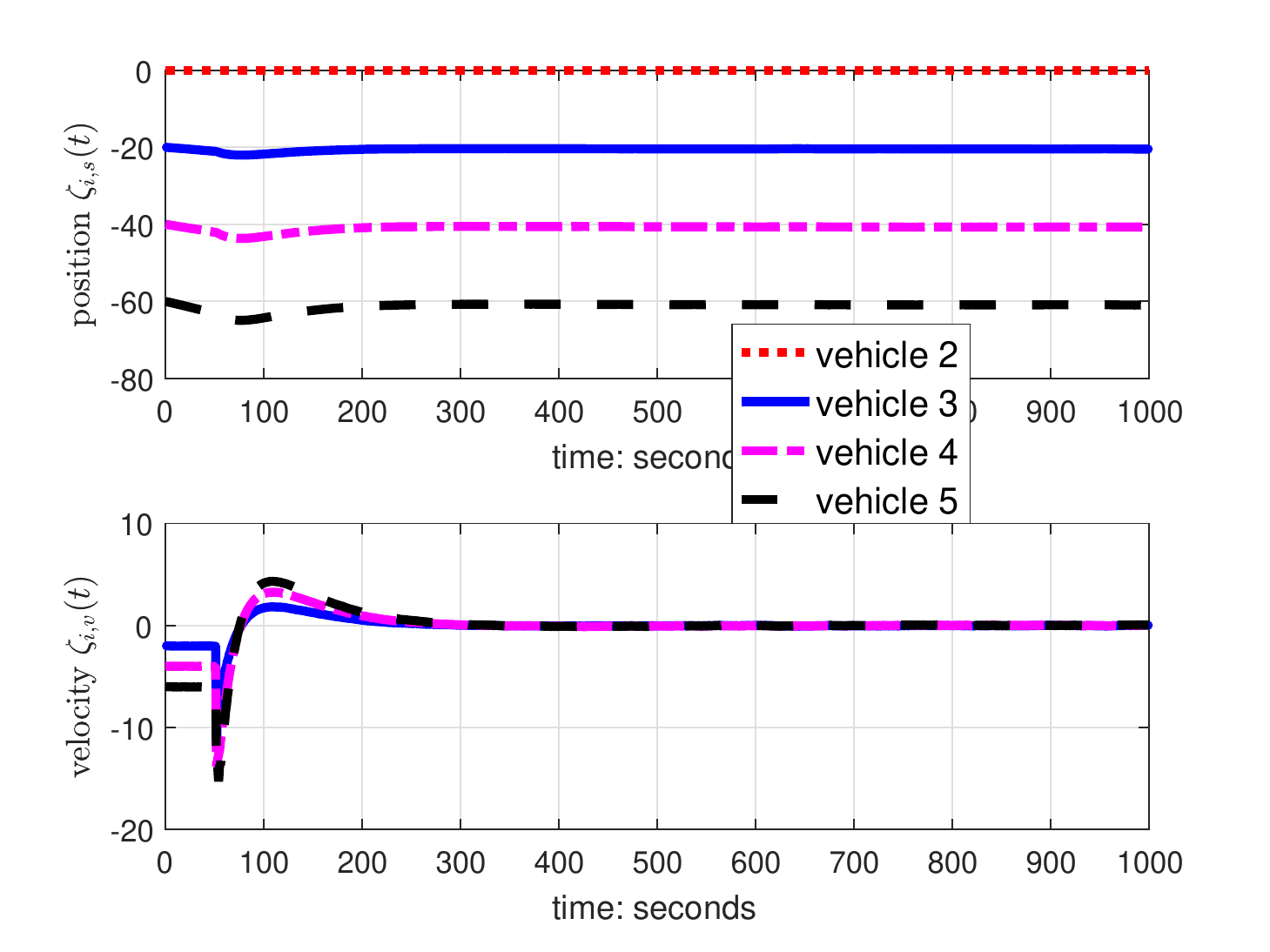}
		\caption{Platooning of vehicles by Algorithm \ref{alg:A}.}
		\label{fig:platoon}
	\end{figure}

	\section{Conclusion}\label{sec_conclusion}
	This paper studied how to design a secure algorithm such that a group of autonomous vehicles achieve platooning under the case that the GPS data of one  vehicle is under attack. We provided a distributed algorithm, which consists of a local state observer, two online attack detectors and a distributed controller. The conditions to ensure bounded state estimation error and vehicle platooning error  were studied.

	\appendix
	
	\subsection{Derivation of \eqref{eq_system5}}\label{app_derivation}
% 	{\color{red} EHS: I suggest removing these equations, up to the next subsection (i.e., equations (\ref{eq_system4}) to (\ref{eq_system_1})) for the CDC version. The reader might be able to follow and do that by him/herself. For the ArXiv version, they could be in-line (within texts in the Appendix).}
	For vehicle $i\in \{2,3,\dots,N-1\}$, by \eqref{eq_system2} and \eqref{eq_system3}, we have
	\begin{equation}\label{eq_system4}
	\begin{split}
	y_{i|i-1}(t)&=x_i(t)+ a_{i-1}(t)+d_{i|i-1}(t)\\
	y_{i|i+1}(t)&=x_i(t)+ a_{i+1}(t)+d_{i|i+1}(t)
	\end{split}
	\end{equation}
	where 
	\begin{align*}
	y_{i|i-1}(t)&=y_{i-1,i}(t)+y_{i-1,i-1}(t)\\
	y_{i|i+1}(t)&=y_{i+1,i+1}(t)-y_{i,i+1}(t)\\
	%\bar a_{j,i}(t)&=a_{j,i}(t)+a_{j,j}(t)\\
	d_{i|i-1}(t)&=d_{i-1,i}(t)+d_{i-1,i-1}(t)\\
	d_{i|i+1}(t)&=d_{i+1,i+1}(t)-d_{i,i+1}(t).
	\end{align*}
	
	For vehicle $1$,
	\begin{equation}\label{eq_system_0}
	\begin{split}
	y_{1|2}(t)&=x_1(t)+ a_{2}(t)+d_{1|2}(t)\\
	y_{1|3}(t)&=x_1(t)+ a_{3}(t)+d_{1|3}(t)
	\end{split}
	\end{equation}
	where
	\begin{align*}
	y_{1|2}(t)&=y_{2,2}(t)-y_{1,2}(t)\\
	y_{1|3}(t)&=y_{3,3}(t)-y_{1,2}(t)-y_{2,3}(t)\\
	d_{1|2}(t)&=d_{2,2}(t)-d_{1,2}(t)\\
	d_{1|3}(t)&=d_{3,3}(t)-d_{1,2}(t)-d_{2,3}(t).
	\end{align*}

	For vehicle $N$,
	\begin{equation}\label{eq_system_1}
	\begin{split}
	y_{N|N-1}(t)&=x_N(t)+ a_{N-1}(t)+d_{N|N-1}(t)\\
	y_{N|N-2}(t)&=x_N(t)+ a_{N-2}(t)+d_{N|N-2}(t)
	\end{split}
	\end{equation}
	where 
	\begin{align*}
	y_{N|N-1}(t)&=y_{N-1,N}(t)+y_{N-1,N-1}(t)\\
	y_{N|N-2}(t)&=y_{N-1,N}(t)+y_{N-2,N-1}(t)+y_{N-2,N-2}(t)\\
	d_{N|N-1}(t)&=d_{N-1,N}(t)+d_{N-1,N-1}(t)\\
	d_{N|N-2}(t)&=d_{N-1,N}(t)+d_{N-2,N-1}(t)+d_{N-2,N-2}(t).
	\end{align*}
	
	From \eqref{eq_system4}--\eqref{eq_system_1}, \eqref{eq_system5} is obtained.

	\subsection{Proof of Lemma \ref{lem_detec2}}
	Let $e_i(t)=\hat x_i(t)-x_i(t)$, $i\in\{1,\dots,N\}$. 
	We use an inductive method to prove the two conclusions.		
	At the initial time, under Assumption \ref{ass_noise},  $\norm{e_i(0)}\leq q.$ 
	For an attack-free vehicle $i$, by equations \eqref{eq_local_state}--\eqref{eq_system2},
	\begin{align*}
	\norm{y_{i,i}(1)-A\hat x_{i}(0)}&=\norm{-Ae_i(0)+n_i(1)+d_{i,i}(1)}\\
	&\leq \norm{A}q+\epsilon+\mu.
	\end{align*}
	
	Assume at time $t-1$, 
	\begin{align*}
	\begin{split}
	\norm{e_i(t-1)}&\leq \rho(t-1)\\
	\norm{y_{i,i}(t)-A\hat x_{i}(t-1)}&\leq \norm{A}\rho(t-1)+\epsilon+\mu.
	\end{split}
	\end{align*}
	% Then we first consider vehicle  $i\in\{2,\dots,N-1\}$.
	Denote 
	\begin{align*}
	\bar K_{i,\mathcal{A}^c}(t)&=:\diag\bigg\{k_{i,j_1}^{[1]}(t)\mathbb I_{j_1\in \mathcal{A}^c},k_{i,j_1}^{[2]}(t)\mathbb I_{j_1\in \mathcal{A}^c},\\
	&\qquad k_{i,j_2}^{[1]}(t)\mathbb I_{j_2\in \mathcal{A}^c},k_{i,j_2}^{[2]}(t)\mathbb I_{j_2\in \mathcal{A}^c},\\
	&\qquad k_{i,j_3}^{[1]}(t)\mathbb I_{j_3\in \mathcal{A}^c},k_{i,j_3}^{[2]}(t)\mathbb I_{j_3\in \mathcal{A}^c}\bigg\}\in\mathbb{R}^{6\times 6},\nonumber
	\end{align*}
	As we see, $\bar K_{i,\mathcal{A}^c}(t)$ is diagonal and its diagonal  elements are zero if the corresponding vehicles are under attack. 
	Then we define the complementary of $\bar K_{i,\mathcal{A}^c}(t)$ with respect to $\bar K_{i}(t)$:
	\begin{align*}
	\bar K_{i,\mathcal{A}}(t)=	\bar K_{i}(t)-\bar K_{i,\mathcal{A}^c}(t).
	\end{align*}
	
	Let $\bar K_{i}^{[j]}(t)$ be the $j$th diagonal element of $\bar K_{i}(t)$, $j=1,\dots,6$, and $\boldsymbol{d}_i^{[j]}(t)$ be the $j$th element of $\boldsymbol{d}_i(t)$. Denote
	\begin{align*}
	\hat  K_{i}(t)&=\begin{pmatrix}
	\sum\limits_{j=1,3,5}\bar K_{i}^{[j]}(t)&0\\
	0& \sum\limits_{j=2,4,6}\bar K_{i}^{[j]}(t)
	\end{pmatrix}\in\mathbb{R}^{2\times 2}\\
	W_i(t)&=\sum\limits_{j=1,3,5}\begin{pmatrix}
	\bar K_{i}^{[j]}(t)&0\\
	0&\bar K_{i}^{[j+1]}(t)
	\end{pmatrix}\begin{pmatrix}
	\boldsymbol{d}_i^{[j]}(t)\\
	\boldsymbol{d}_i^{[j+1]}(t)
	\end{pmatrix}\in\mathbb{R}^{2}.
	\end{align*}
	
	%	 Under Assumption \ref{ass_attack}, there is only one vehicle which could be under attack. In other words, 
	By Algorithm \ref{alg:A},  we have
	\begin{align*}
	e_i(t)=&Ae_i(t-1)+\frac{1}{2} C^T\bar K_i(t)(z_i(t)-C\bar x_i(t))\\
	=&(I_2-\frac{1}{2}\hat  K_{i}(t))Ae_i(t-1)+\frac{1}{2}\hat  K_{i}(t)n_i+\frac{1}{2}W_i(t)\\
	&+\frac{1}{2} C^T\bar K_{i,\mathcal{A}}(t)(z_i(t)-C\bar x_i(t)).
	\end{align*}
	
	Taking the norm operation of $e_i(t)$ yields
	\begin{align}\label{eq_int}
	\norm{e_i(t)}\leq& \norm{(I_2-\frac{1}{2}\hat  K_{i}(t))A}\norm{e_i(t-1)}\nonumber\\
	&+|\mathcal{A}_{i,c}|\frac{1}{2}(\epsilon+\mu)+|\mathcal{A}_{i}|\sqrt{2}\frac{1}{2}\beta\nonumber\\
	\leq &\norm{(I_2-\frac{1}{2}\hat  K_{i}(t))A}\norm{e_i(t-1)}+Q\nonumber\\
	\leq& (1-k(t))\norm{A}\rho(t-1)+Q\nonumber\\
	=&\rho(t).
	\end{align}

	For an attack-free vehicle $i$, by equations \eqref{eq_local_state}--\eqref{eq_system2},
	\begin{align*}
	\norm{y_{i,i}(t+1)-A\hat x_{i}(t)}&=\norm{-Ae_i(t)+n_i(t+1)+d_{i,i}(t+1)}\\
	&\leq \norm{A}\rho(t)+\epsilon+\mu.
	\end{align*}
	\subsection{Proof of Theorem \ref{thm_estimation}}
	From Lemma \ref{lem_detec2}, $\norm{\hat x_i(t)-x_i(t)}\leq \rho(t)$.
	The conclusion of this theorem holds if $\limsup \limits_{t\rightarrow \infty} \rho(t)\leq \alpha$, which will be proved in the following. 
	
	Let $k^*=\min\{1,\frac{\beta}{\norm{A}q+\epsilon+\mu}\}.$
	Next, we prove $\rho(t)< q$ for $t\geq 1$. From \eqref{k_seq} and $\rho(0)= q$,  $k(1)\geq k^*$.  By \eqref{sequ}, 
	$\rho(1)=(1-k(1))\norm{A}\rho(0)+Q
	\leq(1-k^*)\norm{A}q+Q
	<q.$
	% \begin{align*}
	% \rho(1)=&(1-k(1))\norm{A}\rho(0)+Q\\
	% \leq&(1-k^*)\norm{A}q+Q\\
	% <&q.
	% \end{align*}
	Suppose at time $t$, $\rho(t)< q$. 
	At time $t+1$, from \eqref{k_seq} and  $k(t+1)>k^*$. By \eqref{sequ},  $\rho(t+1)<(1-k^*)\norm{A}q+Q<q.$
	%\begin{align*}
	%	\rho(t+1)=&(1-k(t+1))\norm{A}\rho(t)+Q\\
	%	<&(1-k^*)\norm{A}q+Q\\
	%	<&q.
	%\end{align*}
	Thus, $\rho(t)< q$ for $t\geq 1.$

	%
	%
	%
	%
	%Due to $\norm{e_i(0)}\leq q$, if \eqref{condition0} holds,
	%we have $\norm{e_i(1)}< q$ by \eqref{eq_int}. By repeatedly applying the above analysis, we have
	%\begin{align*}
	%\norm{e_i(t)}<q, i\in\{1,\dots,N\}.
	%\end{align*}
	
	Note that $0<\norm{(1-k^*)A}<1$, if \eqref{condition0} is satisfied and $k^*\neq 1$. As a result, for $i\in\{1,\dots,N\},$ we have
	\begin{align}\label{alpha_2}
	&\limsup\limits_{t\rightarrow \infty}\rho(t)\nonumber\\
	\leq& \alpha:=Q\left(\mathbb I_{k^*=1}+\frac{1}{1-\norm{(1-k^*)A}}\mathbb I_{k^*\neq 1}\right),
	\end{align}
	where $Q=\frac{3}{2}(\epsilon+\mu)+\frac{\sqrt{2}}{2}\beta.$
	
	\subsection{Proof of Theorem \ref{thm_control}}
	Suppose $x_i^*(t)=[s_i^*(t),v_i^*(t)]^T$ is the desired state of vehicle $i$. 
	Denote the tracking error of vehicle $i$,  $2\leq i\leq N$,  by $\tilde x_i(t)=[\tilde s_i(t),\tilde v_i(t)]^T$ which satisfies the following equation
	\begin{align}\label{system_control}
	\tilde x_i(t+1)&=A\tilde x_i(t)+[0,T	\tilde  u_i(t)]^T+\delta_i(t)\\
	\delta_i(t)&=[0,T\hat u_i(t)]^T+n_{i}(t)
	\end{align}
	where
	\begin{align}
	\begin{split}
	\tilde u_{i}(t)=&\sum_{j\in\mathcal{N}_i}\big(g_s(	\tilde s_{j}(t)-	\tilde s_{i}(t))\\
	&+g_v( 	\tilde v_{j}(t)-	\tilde v_{i}(t))\big), 2\leq i\leq N, 1\leq j\leq N,\\
	\hat u_i(t)=&\sum_{j\in\mathcal{N}_i}\big(g_s(	(\bar s_{j}(t)-s_{j}(t))-	(\hat s_{i}(t)- s_{i}(t)))\\
	&+g_v( 	(\bar v_{j}(t)-v_{j}(t))-	(\hat v_{i}(t)- v_{i}(t))\big)
	\end{split}
	\end{align}
	Let $\tilde X(t)=[\tilde x_1(t)^T,\tilde x_2(t)^T,\dots,\tilde x_N(t)^T]$ and $\delta(t)=[\delta_1(t)^T,\delta_2(t)^T,\dots,\delta_N(t)^T]$, then we have
	\begin{align}\label{eq_Error}
	\tilde X(t+1)=P\tilde X(t)+\delta(t),
	\end{align}
	where $P=I_{N-1}\otimes A-\mathcal{L}_g\otimes F$, $F=\begin{pmatrix}
	0&0\\
	T g_s&	T g_v
	\end{pmatrix}$ and $\mathcal{L}_g$ is the ground Laplacian matrix with respect to the nodes $\{2,3,\dots,N\}$ obtained by removing the first row and first column of the Laplacian matrix $\mathcal{L}$. 
	
	By Theorem \ref{thm_estimation}, $\sup_{t\geq 0}\norm{\delta(t)}<\infty$. Based on the BIBO stability principle, the stability of $\tilde X(t)$ in \eqref{eq_Error} depends on $P$. According to \cite{hao2010effect}, the spectrum of $P$ is 
	\begin{align}
	\sigma(P)&=\bigcup_{\lambda_{l}\in \sigma(\mathcal{L}_g)}\{A-\lambda_{l}F\}\nonumber\\
	&=\bigcup_{\lambda_{l}\in \sigma(\mathcal{L}_g)}\Bigg\{\sigma\begin{pmatrix}
	1&T\\
	-\lambda_l Tg_s& 1-\lambda_lTg_v
	\end{pmatrix}\Bigg\}
	\end{align}
	where $\sigma(\cdot)$ is the set of distinct  eigenvalues. 
	
	Let $Q=\begin{pmatrix}
	1&T\\
	-\lambda_l Tg_s& 1-\lambda_lTg_v
	\end{pmatrix}$. The eigenvalue of $Q$, denoted by $s$, is the root of the following equation
	\begin{align}\label{charastic}
	s^2+(\lambda_lTg_v-2)s+\lambda_l T^2g_s-\lambda_lTg_v+1=0.
	\end{align}
	By  \cite{hao2010effect}, all eigenvalues of $\mathcal{L}_g$ are positive, i.e., $\lambda_{l}>0$.
	Under the condition in \eqref{eq_condition}, we have
	\begin{align*}
	&(\lambda_lTg_v-2)^2-4(\lambda_l T^2g_s-\lambda_lTg_v+1)\\
	=&\lambda_l^2T^2g_v^2-4\lambda_l T^2g_s\\
	= &\lambda_lT^2(\lambda_lg_v^2-4 g_s)\\
	\leq&\lambda_lT^2(\lambda_{\max}(\mathcal{L}_g)g_v^2-4 g_s)\\
	<&0.
	\end{align*}
	which results in that \eqref{charastic} has two conjugate complex roots. The two roots  share the same modulus in the following
	$|s|=\lambda_l T^2g_s-\lambda_lTg_v+1.$
	Due to \eqref{eq_condition} and $\lambda_l >0$, $|s|<1.$ The  eigenvalues of $Q$ fall into the open unit disk for each $\lambda_l.$ 
	Therefore, $P$ is Schur stable.
	
	Under Theorem \ref{thm_estimation}, it can be proved that there is a constant scalar $\alpha_3>0$, such that
	$	\limsup\limits_{t\rightarrow \infty}\norm{\delta(t)}\leq \alpha_3.$
	By Lemma \ref{lem_stability} and \eqref{eq_Error},  there is a constant scalar $\alpha_4>0$, such that
	$\limsup\limits_{t\rightarrow \infty}\norm{\tilde X(t)}\leq \alpha_4,$
	which ensures the conclusion of this theorem.

	\subsection{Useful lemma}
	\begin{lemma}\label{lem_stability}
		Consider the  linear dynamical system 
		$	x(t+1)=Fx(t)+G(t),$
		where $F\in\mathbb{R}^{n\times n}$ is a Schur stable matrix. Then there is a positive definite matrix $P,$ satisfying $F^TPF-P=-I_n$, such that
		\begin{align}\label{eq_x}
		\norm{x(t)}^2\leq& \frac{\lambda^t\lambda_{\max}(P)\norm{x(0)}^2}{\lambda_{\min}(P)}\nonumber\\
		&+\frac{\beta}{\lambda_{\min}(P)}\sum_{l=0}^{t-1}\lambda^{t-1-l}\norm{G(l)}^2.
		\end{align}
		where $\lambda=1-\frac{1}{2\lambda_{\max}(P)}\in(0,1)$ and 	$\beta=\norm{P}+2\norm{PF}^2.$ 
		Furthermore,
		\begin{enumerate}
			\item if $\sup_{t\geq 0}\norm{G(t)}\leq \alpha_1$, then
			$$\limsup\limits_{t\rightarrow \infty}\norm{x(t)}^2\leq \frac{\beta\alpha_1^2}{\lambda_{\min}(P)(1-\lambda)};$$
			\item if $\limsup\limits_{t\rightarrow \infty}\norm{G(t)}\leq \alpha_2$, then 
			$$		\limsup\limits_{t\rightarrow \infty}\norm{x(t)}^2\leq \frac{\beta\alpha_2^2}{\lambda_{\min}(P)(1-\lambda)}.$$
		\end{enumerate}
	\end{lemma}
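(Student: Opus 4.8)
The plan is to run a standard discrete-time Lyapunov / ISS argument built around the matrix $P$ solving $F^TPF-P=-I_n$. First I would justify the existence of $P$: since $F$ is Schur stable, this discrete Lyapunov equation has a unique symmetric positive definite solution $P=\sum_{k\geq 0}(F^T)^kF^k$. Rearranging gives $P=I_n+F^TPF\succeq I_n$, so $\lambda_{\max}(P)\geq 1$ and hence $\lambda=1-\frac{1}{2\lambda_{\max}(P)}\in[\tfrac12,1)\subset(0,1)$, which is exactly the claimed range. This also records the two-sided bound $\lambda_{\min}(P)\norm{x}^2\leq x^TPx\leq\lambda_{\max}(P)\norm{x}^2$ that I will use repeatedly.

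Next I would take the Lyapunov function $V(x)=x^TPx$ and evaluate one step along the dynamics. Substituting $x(t+1)=Fx(t)+G(t)$ and using $F^TPF=P-I_n$ yields
$$V(x(t+1))=V(x(t))-\norm{x(t)}^2+2x(t)^TF^TPG(t)+G(t)^TPG(t).$$
I would bound the cross term by Young's inequality, $2x(t)^TF^TPG(t)\leq\tfrac12\norm{x(t)}^2+2\norm{PF}^2\norm{G(t)}^2$, using $\norm{F^TP}=\norm{PF}$ (valid since $P=P^T$), and the last term by $\norm{P}\norm{G(t)}^2$. Collecting terms gives
$$V(x(t+1))\leq V(x(t))-\tfrac12\norm{x(t)}^2+\beta\norm{G(t)}^2,\qquad \beta=\norm{P}+2\norm{PF}^2.$$
The coefficient $\tfrac12$ in Young's inequality is precisely what reproduces the stated $\lambda$ and $\beta$, and choosing it is the one step that requires genuine care; everything after it is bookkeeping.

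To convert this into a contraction I would apply $V(x(t))\leq\lambda_{\max}(P)\norm{x(t)}^2$ so that $-\tfrac12\norm{x(t)}^2\leq-\frac{1}{2\lambda_{\max}(P)}V(x(t))$, which gives $V(x(t+1))\leq\lambda V(x(t))+\beta\norm{G(t)}^2$. Unrolling this recursion from $t=0$ produces $V(x(t))\leq\lambda^tV(x(0))+\beta\sum_{l=0}^{t-1}\lambda^{t-1-l}\norm{G(l)}^2$, and then substituting $\lambda_{\min}(P)\norm{x(t)}^2\leq V(x(t))$ on the left and $V(x(0))\leq\lambda_{\max}(P)\norm{x(0)}^2$ on the right, followed by division by $\lambda_{\min}(P)$, yields \eqref{eq_x} exactly.

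Finally, for the two asymptotic claims I would argue directly from \eqref{eq_x}. Since $\lambda\in(0,1)$ the prefactor $\lambda^t\to 0$, so only the convolution sum contributes to the $\limsup$. Under $\sup_{t\geq 0}\norm{G(t)}\leq\alpha_1$, bounding $\norm{G(l)}^2\leq\alpha_1^2$ and summing the geometric series $\sum_{k\geq 0}\lambda^k=\frac{1}{1-\lambda}$ gives claim 1 immediately. For claim 2, given $\limsup_{t\rightarrow\infty}\norm{G(t)}\leq\alpha_2$, for each $\varepsilon>0$ choose $T$ with $\norm{G(l)}\leq\alpha_2+\varepsilon$ for all $l\geq T$; splitting the sum at $T$, the head is $O(\lambda^{t-T})\to 0$ while the tail is bounded by $(\alpha_2+\varepsilon)^2/(1-\lambda)$, so letting $t\rightarrow\infty$ and then $\varepsilon\rightarrow 0$ recovers the stated bound. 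The only mildly delicate point in the whole proof is this $\varepsilon$-splitting in the $\limsup$ case; the rest is routine.
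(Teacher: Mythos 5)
Your proposal is correct and follows essentially the same route as the paper: the series solution $P=\sum_{k\geq 0}(F^T)^kF^k$ of the Lyapunov equation, the one-step decrease of $V(x)=x^TPx$ with the cross term handled by Young's inequality with weight $\tfrac12$ (yielding exactly the stated $\lambda$ and $\beta$), the unrolled recursion, and the $\varepsilon$-splitting of the convolution sum for the $\limsup$ case. No gaps; the argument matches the paper's proof step for step.
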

	\begin{proof}
		First, it can be easily verified that $P=\sum_{i=0}^{\infty}(F^i)^TF^i$ is the solution of $F^TPF-P=-I_n$ with $\lambda_{\min}(P)\geq 1$.
		Let $V(x(t))=x(t)^TPx(t)$. Then we consider 
		\begin{align}
		&V(x(t+1))-V(x(t))\nonumber\\
		=&G(t)^TPG(t)+2x(t)^TF^TPG(t)-x(t)^Tx(t)\nonumber\\
		\overset{(a)}{\leq} & \norm{P}\norm{G(t)}^2-\frac{1}{2}x(t)^Tx(t)+2G(t)^TPFF^TPG(t)\nonumber\\
		\leq& \beta\norm{G(t)}^2-\frac{1}{2\lambda_{\max}(P)}V(x(t)),\label{eq_derive}
		\end{align}
		where $(a)$ is obtained by noting that $(\frac{1}{\sqrt{2}}x(t)-\sqrt{2}F^TPG(t))^T(\frac{1}{\sqrt{2}}x(t)-\sqrt{2}F^TPG(t))\geq 0.$
		By \eqref{eq_derive}, we then have
		\begin{align}\label{eq_v}
		V(x(t+1))\leq \lambda V(x(t))+\beta \norm{G(t)}^2.
		\end{align}
		By applying \eqref{eq_v} for $t$ times, we have
		\begin{align}\label{eq_v2}
		V(x(t))\leq \lambda ^tV(x(0))+\beta\sum_{l=0}^{t-1}\lambda^{t-1-l}\norm{G(l)}^2.
		\end{align}
		The equation \eqref{eq_x} is straightforward to be obtained by \eqref{eq_v2}. 
		
		The conclusion in 1) is obtained by \eqref{eq_x}  and $\lambda\in(0,1)$. Next, we prove 2).
		To prove 2), we just need to prove that for $\forall \epsilon_0>0$, there is a constant $T_0>0$, such that for any $t\geq T_0$, $\norm{x(t)}^2\leq \frac{\beta\alpha_2^2}{\lambda_{\min}(P)(1-\lambda)}+\epsilon_0.$ 
		
		Due to $\limsup\limits_{t\rightarrow \infty}\norm{G(t)}\leq \alpha_2$, $\forall \epsilon_1>0$, there is a constant $T_1>0$, such that for any $t\geq T_1$, $\norm{G(t)}\leq \alpha_2+\epsilon_1$.
		Then we rewrite \eqref{eq_x} in the following
		\begin{align}\label{eq_x2}
		\norm{x(t)}^2\leq& \frac{\lambda^t\lambda_{\max}(P)\norm{x(0)}^2}{\lambda_{\min}(P)}\nonumber\\
		&+\frac{\beta}{\lambda_{\min}(P)}\sum_{l=0}^{T_1-1}\lambda^{t-1-l}\norm{G(l)}^2\nonumber\\
		&+\frac{\beta}{\lambda_{\min}(P)}\sum_{l=T_1}^{t-1}\lambda^{t-1-l}\norm{G(l)}^2.
		\end{align}
		We analyze the three terms on the right-hand side of \eqref{eq_x2}, respectively.
		First, due to $\lambda\in(0,1)$, there is a time $T_2>0$, such that $t\geq T_2$,
		\begin{align}\label{eq_first}
		\frac{\lambda^t\lambda_{\max}(P)\norm{x(0)}^2}{\lambda_{\min}(P)}<\frac{\epsilon_0}{3}.
		\end{align} 	
		Second, due to $\sup_{t\geq 0}\norm{G(l)}^2<\infty,$ there is a time $T_3>0$, such that $t\geq T_3$,
		\begin{align}\label{eq_second}
		\frac{\beta}{\lambda_{\min}(P)}\sum_{l=0}^{T_1-1}\lambda^{t-1-l}\norm{G(l)}^2<\frac{\epsilon_0}{3}.
		\end{align}
		Finally, recall that $ \forall \epsilon_1>0$, $\norm{G(t)}\leq \alpha_2+\epsilon_1$ for any $t\geq T_1+1$. By taking $\epsilon_1<\frac{\epsilon_0\lambda_{\min}(P)(1-\lambda)}{3\beta}$, we have
		\begin{align}\label{eq_third}
		&\frac{\beta}{\lambda_{\min}(P)}\sum_{l=T_1}^{t-1}\lambda^{t-1-l}\norm{G(l)}^2\nonumber\\
		\leq &\frac{\beta(\alpha_2+\epsilon_1)}{\lambda_{\min}(P)}\frac{(1-\lambda^{t-T_1})}{1-\lambda}\nonumber\\
		\leq& \frac{\beta\alpha_2}{\lambda_{\min}(P)(1-\lambda)}+\frac{\beta\epsilon_1}{\lambda_{\min}(P)(1-\lambda)}\nonumber\\
		<&\frac{\beta\alpha_2}{\lambda_{\min}(P)(1-\lambda)}+\frac{\epsilon_0}{3}.
		\end{align}
		Let $T_0=\max\{T_1+1,T_2,T_3\}$. Then given $\forall \epsilon_0>0$,  for any $t\geq T_0$, by \eqref{eq_x2}--\eqref{eq_third}, $\norm{x(t)}^2\leq \frac{\beta\alpha_2^2}{\lambda_{\min}(P)(1-\lambda)}+\epsilon_0.$ 	
	\end{proof}

	\bibliography{All_references}

\begin{thebibliography}{10}

\bibitem{milanes2014cooperative}
V.~Milan{\'e}s, S.~E. Shladover, J.~Spring, C.~Nowakowski, H.~Kawazoe, and
  M.~Nakamura, ``Cooperative adaptive cruise control in real traffic
  situations,'' {\em IEEE Transactions on Intelligent Transportation Systems},
  vol.~15, no.~1, pp.~296--305, 2014.

\bibitem{liang2016heavy}
K.-Y. Liang, J.~M{\aa}rtensson, and K.~H. Johansson, ``Heavy-duty vehicle
  platoon formation for fuel efficiency,'' {\em IEEE Transactions on
  Intelligent Transportation Systems}, vol.~17, no.~4, pp.~1051--1061, 2016.

\bibitem{siampis2017real}
E.~Siampis, E.~Velenis, S.~Gariuolo, and S.~Longo, ``A real-time nonlinear
  model predictive control strategy for stabilization of an electric vehicle at
  the limits of handling,'' {\em IEEE Transactions on Control Systems
  Technology}, no.~99, pp.~1--13, 2017.

\bibitem{jalali2017integrated}
M.~Jalali, E.~Hashemi, A.~Khajepour, S.-K. Chen, and B.~Litkouhi, ``Integrated
  model predictive control and velocity estimation of electric vehicles,'' {\em
  Mechatronics}, vol.~46, pp.~84--100, 2017.

\bibitem{cui2018development}
L.~Cui, J.~Hu, B.~B. Park, and P.~Bujanovic, ``Development of a simulation
  platform for safety impact analysis considering vehicle dynamics, sensor
  errors, and communication latencies: Assessing cooperative adaptive cruise
  control under cyber attack,'' {\em Transportation Research Part C: Emerging
  Technologies}, vol.~97, pp.~1--22, 2018.

\bibitem{ucar2018ieee}
S.~Ucar, S.~C. Ergen, and O.~Ozkasap, ``\protect{IEEE} 802.11 p and visible
  light hybrid communication based secure autonomous platoon,'' {\em IEEE
  Transactions on Vehicular Technology}, vol.~67, no.~9, pp.~8667--8681, 2018.

\bibitem{jin2019adaptive}
X.~Jin, W.~M. Haddad, Z.-P. Jiang, A.~Kanellopoulos, and K.~G. Vamvoudakis,
  ``An adaptive learning and control architecture for mitigating sensor and
  actuator attacks in connected autonomous vehicle platoons,'' {\em
  International Journal of Adaptive Control and Signal Processing}, vol.~33,
  no.~12, pp.~1788--1802, 2019.

\bibitem{turri2017cooperative}
V.~Turri, B.~Besselink, and K.~H. Johansson, ``Cooperative look-ahead control
  for fuel-efficient and safe heavy-duty vehicle platooning,'' {\em IEEE
  Transactions on Control Systems Technology}, vol.~25, no.~1, pp.~12--28,
  2017.

\bibitem{van2017analyzing}
R.~van~der Heijden, T.~Lukaseder, and F.~Kargl, ``Analyzing attacks on
  cooperative adaptive cruise control (\protect{CACC}),'' in {\em 2017 IEEE
  Vehicular Networking Conference (VNC)}, pp.~45--52, IEEE, 2017.

\bibitem{lyamin2014real}
N.~Lyamin, A.~Vinel, M.~Jonsson, and J.~Loo, ``Real-time detection of
  denial-of-service attacks in \protect{IEEE} 802.11 p vehicular networks,''
  {\em IEEE Communications letters}, vol.~18, no.~1, pp.~110--113, 2014.

\bibitem{lyamin2019real}
N.~Lyamin, D.~Kleyko, Q.~Delooz, and A.~Vinel, ``Real-time jamming
  \protect{DoS} detection in safety-critical \protect{V2V C-ITS} using data
  mining,'' {\em IEEE Communications Letters}, vol.~23, no.~3, pp.~442--445,
  2019.

\bibitem{sargolzaei2016machine}
A.~Sargolzaei, C.~D. Crane, A.~Abbaspour, and S.~Noei, ``A machine learning
  approach for fault detection in vehicular cyber-physical systems,'' in {\em
  2016 15th IEEE International Conference on Machine Learning and Applications
  (ICMLA)}, pp.~636--640, IEEE, 2016.

\bibitem{biron2017resilient}
Z.~A. Biron, S.~Dey, and P.~Pisu, ``Resilient control strategy under denial of
  service in connected vehicles,'' in {\em 2017 American Control Conference
  (ACC)}, pp.~4971--4976, IEEE, 2017.

\bibitem{jin2018adaptive}
X.~Jin, W.~M. Haddad, Z.-P. Jiang, and K.~G. Vamvoudakis, ``Adaptive control
  for mitigating sensor and actuator attacks in connected autonomous vehicle
  platoons,'' in {\em 2018 IEEE Conference on Decision and Control (CDC)},
  pp.~2810--2815, IEEE, 2018.

\bibitem{merco2019resilient}
R.~Merco, F.~Ferrante, and P.~Pisu, ``Dos-resilient hybrid controller for
  string-stable connected vehicles,'' in {\em 2019 IEEE Intelligent Vehicles
  Symposium (IV)}, pp.~1639--1644, IEEE, 2019.

\bibitem{hu2016replace}
H.~Hu, R.~Lu, Z.~Zhang, and J.~Shao, ``Replace: A reliable trust-based platoon
  service recommendation scheme in vanet,'' {\em IEEE Transactions on Vehicular
  Technology}, vol.~66, no.~2, pp.~1786--1797, 2016.

\bibitem{petrillo2018collaborative}
A.~Petrillo, A.~Pescap{\'e}, and S.~Santini, ``A collaborative approach for
  improving the security of vehicular scenarios: The case of platooning,'' {\em
  Computer Communications}, vol.~122, pp.~59--75, 2018.

\bibitem{pirani2018resilient}
M.~Pirani, E.~Hashemi, B.~Fidan, J.~W. Simpson-Porco, H.~Sandberg, and K.~H.
  Johansson, ``Resilient estimation and control on k-nearest neighbor platoons:
  A network-theoretic approach,'' {\em IFAC-PapersOnLine}, vol.~51, no.~23,
  pp.~22--27, 2018.

\bibitem{hao2010effect}
H.~Hao, P.~Barooah, and J.~Veerman, ``Effect of network structure on the
  stability margin of large vehicle formation with distributed control,'' in
  {\em 49th IEEE Conference on Decision and Control (CDC)}, pp.~4783--4788,
  IEEE, 2010.

\end{thebibliography}
	\bibliographystyle{ieeetr}
	
\end{document}